\documentclass[letterpaper, 10 pt, conference]{ieeeconf}  %

\IEEEoverridecommandlockouts                              %
\usepackage[compress]{cite}
\usepackage[hidelinks]{hyperref}

\usepackage{graphicx}

\title{\LARGE \bf
A Weak Notion of Symmetry for Dynamical Systems
}

\author{Jake Welde and Pieter van Goor%
\thanks{J. Welde is with the Sibley School of Mechanical and Aerospace Engineering, Cornell University, Ithaca, NY,  USA,
        {\tt\small jakewelde@cornell.edu}.
        P. van Goor is with the School
of Aerospace, Mechanical, and Mechatronic Engineering, The University
of Sydney, NSW, Australia, 
        {\tt\small pieter.vangoor@sydney.edu.au}.}%
}

\usepackage{amsmath, amssymb}

\newcommand{\Left}[0]{{\mathrm{L}}}
\newcommand{\Right}[0]{{\mathrm{R}}}

\DeclareMathOperator{\Diff}{Diff}
\DeclareMathOperator{\Aut}{Aut}

\DeclareMathOperator{\id}{id}

\newcommand{\aut}[0]{{\mathfrak{aut}}}
\newcommand{\X}[0]{{\mathfrak{X}}}
\newcommand{\flow}[0]{{\tau}{}}
\newcommand{\diff}[0]{{\mathrm{d}}}
\newcommand{\ddt}[0]{\tfrac{\diff}{\diff t}}
\newcommand{\T}{\mathrm{T}}

\usepackage{amsthm}
\usepackage{thmtools}
\renewcommand\thmcontinues[1]{\textit{continued}}

\declaretheoremstyle[notefont=\normalfont\itshape,bodyfont=\normalfont]{normaltext}

\newtheorem{theorem}{Theorem}
\newtheorem{corollary}{Corollary}
\declaretheorem[name=Definition,style=normaltext]{definition}
\declaretheorem[name=Remark,style=normaltext]{remark}

\newtheorem{proposition}{Proposition}
\newtheorem{lemma}{Lemma}
\newtheorem{fact}{Fact}

\usepackage{xcolor}

\usepackage[compress]{cite}

\begin{document}

\maketitle
\thispagestyle{empty}
\pagestyle{empty}

\begin{abstract}

    Many nonlinear dynamical systems exhibit 
    symmetry, affording substantial benefits for control design, observer architecture, and data-driven control. 
    While the classical notion of group invariance enables a cascade decomposition of the system into highly structured subsystems, it demands very rigid structure in the original system. 
    Conversely, much more general notions (\textit{e.g.}, partial symmetry) have been shown to be sufficient for obtaining less-structured decompositions. 
    In this work, we propose a middle ground termed ``weak invariance'', studying diffeomorphisms ({resp.}, vector fields) that are group invariant 
    up to a diffeomorphism of ({resp.}, vector field on) the symmetry group.
    Remarkably, we prove that weak invariance implies that this diffeomorphism of ({resp.}, vector field on) the symmetry group \textit{must} be an automorphism ({resp.}, group linear).
    Additionally, we demonstrate that a vector field is weakly invariant if and only if its flow is weakly invariant, where the associated group linear vector field generates the associated automorphisms.
    Finally, we show that weakly invariant systems admit a cascade decomposition in which the dynamics are group affine along the orbits.
    Weak invariance thus generalizes both classical invariance and the important class of group affine dynamical systems on Lie groups, laying a foundation for new methods of symmetry-informed control and observer design.
\end{abstract}

\section{Introduction}

Group invariance is a powerful property, characteristic of  many dynamical systems of practical significance.
Tracing the modern lineage of such ideas over almost half a century, group invariance has provided a foundation for numerous methods across the analysis of locomotion \cite{Shapere_Wilczek_1989}, the control of networked multi-agent systems \cite{Vasile2018}, and hierarchical control in robotics \cite{Mishra2025}. 
In learning-based control, the inductive bias of group invariance has enabled distributed learning of safe controllers \cite{Bousias2025}, efficient planning algorithms \cite{zhao2023symplan}, and improved generalization in reinforcement learning for robotic manipulation \cite{Wang2022}.
Meanwhile, the estimation community has designed observer architectures for invariant systems evolving on Lie groups in which the estimation errors evolve independently of the system's state \cite{Bonnabel2008tac, Bonnabel2009tac}. More recent work on observer design has exploited invariance for systems evolving on the more general class of homogeneous manifolds \cite{mahony2020equivariantsystemstheoryobserver} to design filters with more favorable linearization properties \cite{Barrau2017,vanGoor2020} and observers with global stability guarantees \cite{vanGoor2021,van2025synchronous,liu2025global}.
Remarkably,  invariant systems admit a decomposition in which one subsystem (on the quotient space) cascades into another subsystem (on the symmetry group) whose dynamics are left-invariant \cite{Grizzle1985}, revealing the fundamental implications of symmetry on system structure.

Classical group invariance is powerful, but it imposes rather rigid requirements on the system dynamics. This
 has motivated the study of relaxed notions of symmetry that capture a broader class of systems.
\textit{Partial symmetry} \cite{nijmeijer1985partial}---a property much more general than invariance---also induces a cascade decomposition (but requires sacrificing any substantive structure in the driven subsystem).
In the Lagrangian setting, \textit{broken symmetry} captures systems that are group invariant up to symmetry-breaking potential forces \cite{Welde2023} or constraints \cite{garcia2024controlled}.
Although early work in the observer community on state-independent error dynamics applied only to left-invariant systems on Lie groups \cite{Bonnabel2008tac, Bonnabel2009tac}, it was shown more recently that the same properties can be attained for the broader class of \textit{group affine} systems \cite{Barrau2017}, and even stronger properties (\textit{i.e.}, the error evolving independently of the inputs) can be achieved even on homogeneous spaces \cite{vanGoor2021,van2025synchronous}.
In reinforcement learning, the reduction of Markov decision processes 
with invariant transition dynamics and rewards is by now well-understood \cite{Wang2022, Panangaden2024}, but more recent work
has demonstrated that such reductions can also be performed even when the transitions are \textit{not} strictly invariant \cite{Zhao2022thesis, welde2025leveragingsymmetryacceleratelearning}.
These disparate examples motivate the need for a balanced notion of symmetry that retains much of the structure enjoyed by group invariant systems, while also accommodating additional systems of practical interest (or 
admitting yet a larger symmetry group for a given system).

In this work, we propose such a balanced notion of symmetry, which we term ``weak invariance''.
For the time being, we restrict our attention to dynamical systems (\textit{i.e.}, systems lacking exogenous control inputs), and study the implications of this relaxed notion of symmetry on the system's structure.
After first reviewing mathematical preliminaries (in Sec.~\ref{sec:mathematical_preliminaries}),  we define a notion of weak invariance for diffeomorphisms (in Sec.~\ref{sec:notions_of_weak_invariance}), taking inspiration from a class of symmetries 
considered in  \cite{welde2025leveragingsymmetryacceleratelearning} for the transition probabilities of a discrete-time stochastic control system. We then extend these ideas to continuous time by defining analogous notions for vector fields and flows and completely characterizing the relationships between them.
A key finding is that our natural definitions for weak invariance implicitly impose a remarkable link with the automorphism group of the symmetry group. Building upon these insights, we demonstrate (in Sec.~\ref{sec:cascade_decompositions}) that a weakly invariant system admits a cascade decomposition in which the driven subsystem is, in fact, group affine. As we discuss (in Sec.~\ref{sec:conclusion}) while concluding the paper, such surprising connections to the important and well-studied class of group affine systems demonstrate the potential of weak invariance as a foundation for symmetry-informed methods suited for yet a broader class of systems.

\section{Mathematical Preliminaries}

\label{sec:mathematical_preliminaries}

We now briefly recall certain essential aspects of differential geometry and Lie group theory.

\subsection{Smooth Manifolds}

Let $M$ be a smooth ($C^\infty$) manifold.
The set of all diffeomorphisms ${f : M \to M}$ is denoted $\Diff(M)$.
The tangent space at each point $p \in M$ is denoted as $\T_p M$ and the tangent bundle is denoted $\T M$.
For a smooth map ${f : M \to N}$ between manifolds, the \textit{differential} of $f$ is a map between tangent bundles denoted ${\diff f : \T M \to \T N}$.
Given a map ${f : X \times Y \to Z}$, we will often use the partial application notation ${f_x = f(x, \, \cdot\, ) : Y \to Z}$ for any fixed ${x \in X}$, and likewise ${f^y = f(\, \cdot\, , y) : X \to Z}$ for any fixed ${y \in Y}$.

A vector field on $M$ is a smooth map ${V : M \to \T M}$ such that ${V(p) \in \T_p M}$ for all ${p \in M}$.
The set of all vector fields on $M$ is denoted by $\mathfrak{X}(M)$.
A flow $\flow$ on $M$ is an $\mathbb{R}$-action on $M$, that is, a smooth map ${\flow : \mathbb{R} \times M \to M}$ satisfying
\begin{align}
    \flow_{t_1 + t_2}(p) &= \flow_{t_1} \circ \flow_{t_2}(p), &
    \flow_0(p) &= p,
    \label{identity_and_compability_of_flows}
\end{align}
for all $t_1, t_2 \in \mathbb{R}$ and $p \in M$.
A vector field $V \in \mathfrak{X}(M)$ is 
\textit{complete}
if there exists an associated flow $\flow^V$ such that
\begin{align}\label{flow_definition}
    \ddt \; \flow^V_t(p) = V \circ \flow^V_t(p),
\end{align}
for any initial condition $p \in M$ and any time $t \in \mathbb{R}$.

\subsection{Lie Groups and Automorphisms}

A Lie group $G$ is a smooth manifold equipped with a smooth group structure.
The product of any two elements ${g,h \in G}$ is written as ${gh \in G}$, and
the identity element of the group is denoted ${e \in G}$.
The left and right translations ${\Left,\Right : G \times G \to G}$ are defined by
$\Left_g(h) := gh$ and $\Right_g(h) := hg$
respectively.
A vector field $V \in \X(G)$ is \textit{left-invariant} if 
${\diff \Left_{g^{-1}} \circ V \circ \Left_g = V}$ for all ${g \in G}$. 
The Lie algebra $\mathfrak{g}$ 
of
the Lie group $G$
consists of all left-invariant vector fields on $G$,
and we make the identification ${\mathfrak{g} \simeq \T_eG}$.

For groups $G$ and $H$, 
a map ${\sigma : G \to H}$ is called a \emph{homomorphism}
if 
it satisfies 
${\sigma(g_1)\sigma(g_2) = \sigma(g_1 g_2)}$
for all ${g_1, g_2 \in G}$.
A homomorphism ${\sigma : G \to G}$ is called an \textit{endomorphism}, and an endomorphism that is also a diffeomorphism is called an \textit{automorphism.}
The set of all automorphisms 
is 
denoted
${\Aut(G) \subset \Diff(G)}$.
When $G$ is connected, $\Aut(G)$
is 
a  Lie group of dimension at most $(\dim G)^2$.
Any
${W \in \mathfrak{X}(G)}$ is called \emph{group linear} if 
for all ${g, h \in G}$, 
\begin{align}
    W(gh) = \diff \Left_g \circ W(h) + \diff \Right_h \circ W(g)
    .
    \label{definition_of_group_linear_vector_field}
\end{align}
The set of all group linear vector fields is written as ${\aut(G) \subset \mathfrak{X}(G)}$. As suggested by the notation, $\aut(G)$ is in fact the Lie algebra of $\Aut(G)$.

\subsection{Group Actions}
A (left) action of a Lie group $G$ on a manifold $M$ is a smooth map ${\Phi : G \times M \to M}$ satisfying\footnote{
A right action  is defined similarly, but with ${\Phi\big(g, \Phi(h, p)\big) = \Phi(hg, p)}$.
}
\begin{align}
    \Phi\big(g, \Phi(h, p)\big) = \Phi(gh, p), &&
    \Phi(e, p) = p,
    \label{group_action_definition}
\end{align}
for all $g,h \in G$ and $p \in M$. 
A group action is said to be: 
\begin{itemize}
    \item \textit{effective}\footnote{Any group action that is \textit{not} effective induces an effective group action with the same orbits by quotienting out its kernel. Thus, we lose nothing by working only with effective group actions, as we will in the remainder.} (or \textit{faithful}) if only the identity element acts trivially (\textit{i.e.}, if ${\Phi_g = \id_M}$, then ${g = e}$), 
    \item \textit{free} if $\Phi_g$ has no fixed points for all ${g \neq e}$ (\textit{i.e.}, if there exists ${p \in M}$ such that ${\Phi_g(p) = p}$, then $g = e$), and
    \item \textit{proper} if the map ${(g,p) \mapsto \big(\Phi_g(p),p\big)}$ is a proper map (\textit{i.e.}, the preimage of any compact set is compact).
\end{itemize}
The \textit{orbit} of any  ${p \in M}$ is given by $\Phi(G,p)$.
The \textit{infinitesimal generator} of $\Phi$ is the map ${(\cdot)_M : \mathfrak{g} \to \mathfrak{X}(M)}$ given by
\begin{equation}
    \xi_M(p) =  \ddt \big(
        \Phi(\flow^{\xi}_t(e), p)
    \big) \big|_{t=0},
\end{equation}
where $\flow^\xi$ is understood as the flow of the left-invariant vector field ${\xi \in \mathfrak{g}}$.
In particular, the left and right translations in $G$ are respectively left and right actions of $G$ on itself.

\subsection{Group Invariance}

The following are standard notions of invariance with respect to a group action ${\Phi : G \times M \to M}$, which we term ``strong'' invariance whenever necessary to distinguish them from weaker notions of symmetry.
In particular, a diffeomorphism ${f : M \to M}$ is  {$\Phi$-invariant} if for all ${g \in G}$,
    \begin{equation}
        f = \Phi_{g^{-1}} \circ f \circ \Phi_{g}.
        \label{strong_phi_invariance_diffeomorphisms}
    \end{equation}
    A 
    vector field ${V \in \X(M)}$ is {$\Phi$-invariant}\footnote{
The condition \eqref{strong_phi_invariance_vector_fields} can also be 
 expressed as ${\diff \Phi_g \circ V = V \circ \Phi_g}$. Since
  $\diff \Phi_g$ is itself a group action,
 strong invariance is a (very) special case of the more general property of  \textit{equivariance}, and some authors simply refer to it as such. We choose our terminology for its consistency with the standard notion of a left-invariant 
 vector field on a Lie group. 
 Indeed, a (strongly) invariant vector field is a fixed (\textit{i.e.}, invariant) point of the $G$-action on the (infinite-dimensional) space $\X(M)$ given by ${(g,V) \mapsto \diff \Phi_{g^{-1}} \circ V \circ \Phi_g}$.
 } if for all ${g \in G}$,
    \begin{equation}
        V = \diff \Phi_{g^{-1}} \circ V \circ \Phi_{g}.
        \label{strong_phi_invariance_vector_fields}
    \end{equation}
    Finally, a flow ${\flow : \mathbb{R} \times M \to M}$ is {$\Phi$-invariant} if the diffeomorphism ${\flow_t : M \to M}$ is $\Phi$-invariant for each ${t \in \mathbb{R}}$. 
The relationship between $\Phi$-invariant vector fields and their flows is well-known 
and completely characterized 
as follows.
\begin{fact}
[{cf. \cite[Thm. 2.3]{Pappas2000}}]
    Let 
    ${\Phi : G \times M \to M}$ 
    be a group action and ${V \in \X(M)}$ be a complete vector field. Then, the flow $\flow^V$ is $\Phi$-invariant if and only if $V$ is $\Phi$-invariant.%
    \label{standard_phi_invariance_if_and_only_if}
\end{fact}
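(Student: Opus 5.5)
The plan is to prove the two directions separately, using only the flow identities \eqref{identity_and_compability_of_flows}--\eqref{flow_definition} and the uniqueness of integral curves of a smooth vector field. The idea is to rewrite \eqref{strong_phi_invariance_vector_fields} in the equivalent form ${\diff \Phi_g \circ V = V \circ \Phi_g}$, which says that $V$ is $\Phi_g$-related to itself. Likewise, \eqref{strong_phi_invariance_diffeomorphisms} for $\flow_t$ reads ${\Phi_g \circ \flow_t = \flow_t \circ \Phi_g}$, which says that $\flow_t$ commutes with $\Phi_g$.

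\emph{Flow invariance implies vector field invariance.} Fix ${g \in G}$ and ${p \in M}$. By hypothesis, ${\Phi_g \circ \flow^V_t(p) = \flow^V_t \circ \Phi_g(p)}$ holds for every $t$. I would differentiate both sides in $t$ at ${t=0}$. By the chain rule and \eqref{flow_definition}, the left side becomes ${\diff \Phi_g \circ V(p)}$. By \eqref{flow_definition} with initial condition $\Phi_g(p)$, the right side becomes ${V(\Phi_g(p))}$. This gives ${\diff\Phi_g\circ V = V\circ\Phi_g}$. Composing on the left with ${\diff\Phi_{g^{-1}} = (\diff \Phi_g)^{-1}}$ yields \eqref{strong_phi_invariance_vector_fields}; here I use the action axioms \eqref{group_action_definition} to get ${\Phi_{g^{-1}} = \Phi_g^{-1}}$.

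\emph{Vector field invariance implies flow invariance.} Fix $g$ and $p$, and consider the two curves ${c_1(t) = \Phi_g(\flow^V_t(p))}$ and ${c_2(t) = \flow^V_t(\Phi_g(p))}$.
\begin{itemize}
\item Both curves start at $\Phi_g(p)$ when ${t=0}$.
\item The curve $c_2$ is an integral curve of $V$ by \eqref{flow_definition}.
\item For $c_1$, the chain rule gives ${\dot c_1(t) = \diff \Phi_g \circ V(\flow^V_t(p))}$. Invariance, in the form ${\diff\Phi_g\circ V = V\circ\Phi_g}$, turns this into ${V(c_1(t))}$, so $c_1$ is also an integral curve of $V$.
\end{itemize}
Uniqueness of integral curves for smooth vector fields, on all of $\mathbb{R}$ since $V$ is complete, then gives ${c_1 = c_2}$. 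Hence ${\Phi_g \circ \flow^V_t = \flow^V_t \circ \Phi_g}$ for all $t$ and $g$, which is \eqref{strong_phi_invariance_diffeomorphisms} for each $\flow^V_t$.

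The only step with any substance is the appeal to uniqueness of integral curves in the second direction. I expect no real obstacle there, since it is the standard ODE uniqueness theorem applied in local charts and extended along $\mathbb{R}$ by a connectedness argument. Completeness is exactly what ensures both curves are defined for all times, so the equality holds globally rather than only for small $t$.
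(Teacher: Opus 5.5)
Your proof is correct and is essentially the argument the paper gives for Theorem~\ref{weak_phi_invariance_if_and_only_if}, specialized to $W \equiv 0$ and $\sigma_t = \id_G$. That argument differentiates the conjugacy identity at $t=0$ for one direction and compares the two integral curves through $\Phi_g(p)$ via uniqueness for the other; the paper notes the Fact can be recovered this way and otherwise only cites Pappas et al. Arguing directly, as you do, also avoids formally importing that theorem's effectiveness and connectedness hypotheses, which the Fact does not assume and which are superfluous when $W \equiv 0$ since its flow is trivially complete.
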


\section{Notions of Weak Invariance}

\label{sec:notions_of_weak_invariance}

In this section, we propose relaxed notions of invariance
with respect to a group action ${\Phi : G \times M \to M}$
for diffeomorphisms, flows, and vector fields.
These definitions are given with respect to additional data defined on the symmetry group, which account for the failure of the entity under consideration to be (strongly) $\Phi$-invariant.
We then study the structure of this additional data, uncovering perhaps unexpected relationships with the automorphism group $\Aut(G)$. We also show that the notions proposed for diffeomorphisms, flows, and vector fields are compatible in the natural way.

\subsection{Weak Invariance of Diffeomorphisms}

We begin by proposing a notion of weak invariance for diffeomorphisms, taking some inspiration from \cite[Eqn. 14]{welde2025leveragingsymmetryacceleratelearning}.%

\begin{definition}
        A diffeomorphism ${f : M \to M}$ is 
        {\textit{weakly $\Phi$-invariant}} if there exists a 
        surjective map
        ${\sigma : G \to G}$ such that for all ${g \in G}$, 
    \begin{equation}
        f = \Phi_{\sigma(g)^{-1}} \circ f \circ \Phi_{g}.
        \label{definition_weak_phi_invariance_of_maps}
    \end{equation}
\end{definition}

Weak invariance is thus a strict generalization of (strong) $\Phi$-invariance  defined in \eqref{strong_phi_invariance_diffeomorphisms}, since we recover 
the traditional notion 
precisely when ${\sigma = \id_G}$. 
Moreover, \eqref{definition_weak_phi_invariance_of_maps} may also be written as ${\Phi_{\sigma(g)} \circ f = f \circ \Phi_g}$, and in fact it can be shown that ${\Upsilon_g := \Phi_{\sigma(g)}}$ is a well-defined group action whenever $\Phi$ is effective. Thus, we see that weak invariance is a special case of the more general notion of \textit{equivariance}, in which truly distinct group actions may act on each side. However, weak invariance is far more specific,
since it
requires the second group action to take the very specific form above. This structure will be  
essential in the results that follow.

Our definition of weak invariance imposes minimal requirements on the map ${\sigma : G \to G}$.
Nonetheless,
our first result shows that such a map enjoys quite rigid structure.

\begin{proposition}
    Let ${\Phi : G \times M \to M}$ be an effective group action and ${f : M \to M}$ be a diffeomorphism that is weakly $\Phi$-invariant with respect to $\sigma : G \to G$. Then, $\sigma \in \Aut(G)$.%
    \label{proposition_sigma_is_an_automorphism}
\end{proposition}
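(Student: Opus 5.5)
The plan is to rewrite \eqref{definition_weak_phi_invariance_of_maps} as the conjugation identity
\[
\Phi_{\sigma(g)} = f \circ \Phi_g \circ f^{-1}
\quad \text{for all } g \in G,
\]
and then read off each required property of $\sigma$ from effectiveness of $\Phi$. This identity holds because $f$ is a diffeomorphism, so we may compose with $f^{-1}$ on the right.

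\textbf{Homomorphism and injectivity.} For $g_1, g_2 \in G$, the conjugation identity and the action property \eqref{group_action_definition} give
\[
\Phi_{\sigma(g_1 g_2)} = f\circ\Phi_{g_1}\circ\Phi_{g_2}\circ f^{-1} = \Phi_{\sigma(g_1)}\circ\Phi_{\sigma(g_2)} = \Phi_{\sigma(g_1)\sigma(g_2)}.
\]
Hence $\Phi_{(\sigma(g_1)\sigma(g_2))^{-1}\sigma(g_1 g_2)} = \id_M$, and effectiveness forces $\sigma(g_1 g_2)=\sigma(g_1)\sigma(g_2)$. For injectivity, suppose $\sigma(g_1)=\sigma(g_2)$. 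Then $f\circ\Phi_{g_1}\circ f^{-1} = f\circ\Phi_{g_2}\circ f^{-1}$, so $\Phi_{g_1}=\Phi_{g_2}$, and effectiveness again gives $g_1=g_2$. Together with the assumed surjectivity, $\sigma$ is a bijective group endomorphism. Its inverse is also a homomorphism, satisfying $\Phi_{\sigma^{-1}(h)} = f^{-1}\circ\Phi_h\circ f$.

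\textbf{Smoothness, the main obstacle.} We must show that $\sigma$ and $\sigma^{-1}$ are smooth. The intended approach uses the classical fact that a continuous homomorphism between Lie groups is automatically smooth. Applying it to $\sigma$ and to $\sigma^{-1}$ then shows $\sigma$ is a diffeomorphism, so $\sigma\in\Aut(G)$. It therefore suffices to show that $\sigma$ is continuous, or at least measurable, since measurable homomorphisms of Lie groups are also continuous.

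I would first try to prove continuity at $e$ directly. Take a sequence $g_n\to e$. Then $\Phi_{\sigma(g_n)} = f\circ\Phi_{g_n}\circ f^{-1}\to \id_M$ uniformly on compact sets. The task is to conclude $\sigma(g_n)\to e$, which amounts to showing that the map $g\mapsto\Phi_g$ is a homeomorphism onto its image, with the compact-open topology on the image.

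For injective effective actions this is delicate in general. If properness were available, it would follow quickly: fix $p$; the points $(\Phi_{\sigma(g_n)}(p),p)$ lie in a compact set, so properness confines $\sigma(g_n)$ to a compact set. Any limit point $h$ of $\sigma(g_n)$ then satisfies $\Phi_h=\id_M$, hence $h=e$ by effectiveness.

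Without properness I would fall back on a Baire-category or measurability argument. The graph of $\sigma$ is
\[
\{(g,h)\in G\times G : \Phi_h\circ f=f\circ\Phi_g\},
\]
which is closed, being an intersection of closed conditions indexed by the points of $M$. A homomorphism between second-countable locally compact groups with closed graph is continuous, by a closed-graph theorem for Polish groups. This gives continuity of $\sigma$, and the same argument applies to $\sigma^{-1}$, completing the proof.
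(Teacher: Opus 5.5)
Your proof is correct. Its algebraic half matches the paper: the homomorphism property and injectivity come from effectiveness, bijectivity from the surjectivity hypothesis, and $\Phi_{\sigma^{-1}(h)} = f^{-1}\circ\Phi_h\circ f$. The paper only packages effectiveness as the partial inverse homomorphism $\hat\Phi^{-1}$ on $\hat\Phi(G)\subseteq\Diff(M)$ and writes $\sigma(g)=\hat\Phi^{-1}(f\circ\Phi_g\circ f^{-1})$.

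The genuine difference is regularity. The paper asserts that $\sigma$ is smooth because every operation in that formula is smooth. This tacitly needs $\hat\Phi^{-1}$ to be smooth, or even just continuous, on $\hat\Phi(G)$, and for effective but non-proper actions this can fail. Take $\Phi_t[x,y]=[x+t,\,y+\alpha t]$ on the torus with $\alpha$ irrational: integers $t_n\to\infty$ with $\alpha t_n\to 0 \bmod 1$ give $\Phi_{t_n}\to\id_M$ in $C^\infty$. This is precisely the obstruction you flagged.

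Your closed-graph argument avoids it without needing properness. By effectiveness, the graph of $\sigma$ equals the closed set $\{(g,h):\Phi_h\circ f=f\circ\Phi_g\}$. The closed graph theorem for Polish groups (using second countability of $G$) then gives continuity, and automatic smoothness of continuous homomorphisms finishes the proof.

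You can even skip the descriptive set theory. The graph $\Gamma$ is a closed subgroup of $G\times G$, hence an embedded Lie subgroup. Both $\mathrm{pr}_1|_\Gamma$ and $\mathrm{pr}_2|_\Gamma$ are bijective Lie group homomorphisms onto $G$, hence diffeomorphisms. So $\sigma=\mathrm{pr}_2\circ(\mathrm{pr}_1|_\Gamma)^{-1}$ and $\sigma^{-1}=\mathrm{pr}_1\circ(\mathrm{pr}_2|_\Gamma)^{-1}$ are smooth.

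The paper's route buys brevity and explicit formulas that it reuses for flows in Proposition~\ref{proposition_sigma_t_is_a_flow}. Yours buys a smoothness claim that is actually justified, at the cost of invoking standard but nontrivial Lie-theoretic results.
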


\begin{proof}
    $\Phi$ may also be understood as a group homomorphism
    \begin{equation*}
        \hat{\Phi} : G \to \Diff(M), \ g \mapsto \Phi_g.
    \end{equation*}
    The map $\hat\Phi$ is injective precisely when $\Phi$ is effective, in which case the partial inverse homomorphism
    \begin{equation*}
        \hat\Phi^{-1} : 
        \hat\Phi(G)
        \subseteq \Diff(M) \to G
    \end{equation*}
    is well-defined on the image of $\hat{\Phi}$.
    Rearranging \eqref{definition_weak_phi_invariance_of_maps} to obtain 
    \begin{equation*}
        \Phi_{\sigma(g)}  = f \circ \Phi_{g} \circ f^{-1},
    \end{equation*}
    it is clear that $f \circ \Phi_{g} \circ f^{-1} \in \hat\Phi(G)$, and moreover
    \begin{equation}
        \sigma(g) = \hat{\Phi}^{-1}\big(
        f \circ \Phi_g \circ f^{-1}
        \big)
        \label{definition_of_sigma}
    \end{equation}
    for all ${g \in G}$. 
    The smoothness of $\sigma$ follows from the smoothness of all operations on the right-hand side of \eqref{definition_of_sigma}.
    To show that $\sigma$ is a homomorphism, we use \eqref{definition_of_sigma} to compute
    \begin{align*}
        \sigma(g) \sigma(h) 
        &= 
        \hat{\Phi}^{-1}\big(
        f \circ \Phi_g \circ f^{-1}
        \big)
        \hat{\Phi}^{-1}\big(
        f \circ \Phi_h \circ f^{-1}
        \big)
        \\
        &= 
        \hat{\Phi}^{-1}\big(
        f \circ \Phi_g \circ f^{-1}
        \circ
        f \circ \Phi_h \circ f^{-1}
        \big)
        \\
        &= 
        \hat{\Phi}^{-1}\big(
        f \circ \Phi_{gh} \circ f^{-1}
        \big)
        = \sigma(gh), 
    \end{align*}
    where the second step relies on the fact that $\hat\Phi^{-1}$ is also a homomorphism.
    To show that $\sigma$ is invertible, we claim that
    \begin{align}
        \sigma^{-1}(g) = \hat{\Phi}^{-1}\big(f^{-1} \circ \Phi_g \circ f\big),
        \label{sigma_inverse_definition}
    \end{align}
    where \eqref{definition_weak_phi_invariance_of_maps} and the surjectivity of $\sigma$ imply 
    that 
    ${f^{-1} \circ \Phi_{g} \circ f}$ ${\in \hat\Phi(G)}$, 
    so
     \eqref{sigma_inverse_definition} is well-defined.
     To verify \eqref{sigma_inverse_definition}, compute
    \begin{align*}
        \sigma^{-1}\circ \sigma(g)
        &= \hat{\Phi}^{-1}\big(f^{-1} \circ \Phi_{\hat{\Phi}^{-1}(f \, \circ\,  \Phi_g \, \circ \, f^{-1})} \circ f\big) 
        \\ &
        = \hat{\Phi}^{-1}(f^{-1} \circ f \circ \Phi_g \circ f^{-1} \circ f) 
        \\ &
        = \hat{\Phi}^{-1}(\Phi_g)
        = g,
    \end{align*}
    and likewise 
    $\sigma\circ \sigma^{-1}(g) = g$,
    for all $g \in G$.
\end{proof}

    In effect, \eqref{definition_of_sigma} gives an explicit formula that $\sigma$ must satisfy, which is completely determined\footnote{In fact, if $\Phi$ is effective and $G$ is connected, then any map $\sigma$ satisfying \eqref{definition_weak_phi_invariance_of_maps} 
     is \textit{automatically} surjective, since it must be an injective endomorphism.} by the diffeomorphism $f$ and the group action $\Phi$.
    Thus, it is clear that $\sigma$ is unique (\textit{i.e.},  if $\Phi$ is effective, 
    a diffeomorphism cannot be weakly $\Phi$-invariant with respect to two different automorphisms).

\subsection{Weak Invariance of Flows}

Recall that a flow ${\flow : \mathbb{R} \times M \to M}$ gives a diffeomorphism $\flow_t$ for each fixed time ${t \in \mathbb{R}}$.
We extend our definition of weak invariance of flows in the natural way and establish how the weak invariance of a flow evolves through time.

\begin{definition}
\label{definition_weak_invariance_of_flows}
A flow  ${\flow : \mathbb{R} \times M \to M}$ is \textit{weakly $\Phi$-invariant} if the map $\flow_t : M \to M$  is weakly $\Phi$-invariant for all $t \in \mathbb{R}$.
\end{definition}

To determine whether a flow is weakly $\Phi$-invariant for all time, it suffices to check its weak $\Phi$-invariance on an arbitrarily small open time interval. The following straightforward lemma follows from the group property of flows \eqref{identity_and_compability_of_flows}.%
\begin{lemma}
\label{small_time_weak_invariance_implies_all_time_weak_invariance}
    A flow $\flow$ is weakly $\Phi$-invariant if and only if 
    there exists $\varepsilon > 0$ such that for all $t \in (-\varepsilon, \varepsilon)$, the diffeomorphism $\flow_t$ is weakly $\Phi$-invariant.
\end{lemma}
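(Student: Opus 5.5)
The forward direction is immediate from Definition~\ref{definition_weak_invariance_of_flows}. If $\flow$ is weakly $\Phi$-invariant, then every $\flow_t$ is weakly $\Phi$-invariant, and in particular so is every $\flow_t$ with $t \in (-\varepsilon,\varepsilon)$, for any $\varepsilon>0$. All the work is in the converse. The plan there is first to show that weak invariance is closed under composition, and then to reach an arbitrary time by composing finitely many small-time flow maps using the group property \eqref{identity_and_compability_of_flows}.

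\textbf{Closure under composition.} Suppose $f_1$ and $f_2$ are weakly $\Phi$-invariant with respect to surjective maps $\sigma_1$ and $\sigma_2$, written in the form $\Phi_{\sigma_i(g)}\circ f_i = f_i\circ\Phi_g$. I claim $f_1\circ f_2$ is weakly $\Phi$-invariant with respect to $\sigma_1\circ\sigma_2$. To see this, compute
\begin{equation*}
f_1\circ f_2\circ\Phi_g = f_1\circ\Phi_{\sigma_2(g)}\circ f_2 = \Phi_{\sigma_1(\sigma_2(g))}\circ f_1\circ f_2 .
\end{equation*}
Since a composition of surjections is surjective, $\sigma_1\circ\sigma_2$ is admissible, and \eqref{definition_weak_phi_invariance_of_maps} holds for $f_1\circ f_2$. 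This step needs neither effectiveness of $\Phi$ nor Proposition~\ref{proposition_sigma_is_an_automorphism}, since the definition only asks for a surjective $\sigma$.

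\textbf{Reaching arbitrary times.} Fix $t\in\mathbb{R}$ and choose an integer $n$ with $n>|t|/\varepsilon$. Then $|t/n|<\varepsilon$, so $\flow_{t/n}$ is weakly $\Phi$-invariant by hypothesis. Iterating \eqref{identity_and_compability_of_flows} gives $\flow_t = \flow_{t/n}\circ\cdots\circ\flow_{t/n}$ with $n$ factors. Applying the closure step $n-1$ times, $\flow_t$ is weakly $\Phi$-invariant with respect to $\sigma_{t/n}^{\,n}$, where $\sigma_{t/n}$ denotes the map associated with $\flow_{t/n}$. The case $t=0$ needs no separate treatment: it is covered directly by the hypothesis, and in any event $\flow_0=\id_M$ is weakly invariant with $\sigma=\id_G$.

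There is no substantial obstacle in this argument. The one point to check is that the chosen surjective maps compose to a surjection, which is trivial. Using only surjectivity, rather than invoking Proposition~\ref{proposition_sigma_is_an_automorphism}, also means the lemma holds without assuming $\Phi$ is effective.
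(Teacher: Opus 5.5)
Your proof is correct and follows essentially the same route as the paper's appendix proof. Both write $\flow_t = (\flow_{t/n})^n$ with $|t/n| < \varepsilon$ and move $\Phi_g$ through one factor at a time, obtaining weak invariance with respect to $(\sigma_{t/n})^n$. Isolating this as a composition-closure step, and explicitly checking that $(\sigma_{t/n})^n$ stays surjective, is a small tidying of the paper's argument, which leaves that surjectivity implicit.
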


Note that Definition~\ref{definition_weak_invariance_of_flows} does not explicitly impose any direct relationship between the automorphisms $\sigma_{t_1}$ and $\sigma_{t_2}$ associated with different times $t_1, t_2$ of a weakly $\Phi$-invariant flow.
However, the structure of 
the
flow actually imposes a strong relationship between these automorphisms, as follows.

\begin{proposition}
    Let ${\Phi : G \times M \to M}$ be an effective group action and ${\flow : \mathbb{R} \times M \to M}$ be a weakly $\Phi$-invariant flow with respect to some ${\sigma_t : G \to G}$ for each ${t \in \mathbb{R}}$.
    Then, the map ${\sigma : (t,g) \mapsto \sigma_t(g)}$  is itself a flow on $G$.
    \label{proposition_sigma_t_is_a_flow}
\end{proposition}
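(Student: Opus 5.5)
We need to show three things about $\sigma(t,g) = \sigma_t(g)$: it satisfies the group law $\sigma_{t_1+t_2} = \sigma_{t_1}\circ\sigma_{t_2}$, it satisfies $\sigma_0 = \id_G$, and it is smooth as a map $\mathbb{R}\times G \to G$. The algebraic identities should follow directly from the uniqueness formula \eqref{definition_of_sigma} in the proof of Proposition~\ref{proposition_sigma_is_an_automorphism}. Since $\Phi$ is effective, that formula gives
\begin{equation*}
\sigma_t(g) = \hat\Phi^{-1}\big(\flow_t \circ \Phi_g \circ \flow_t^{-1}\big),
\end{equation*}
and $\sigma_t$ is uniquely determined by $\flow_t$. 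For $t=0$ we have $\flow_0=\id_M$, so $\sigma_0(g)=\hat\Phi^{-1}(\Phi_g)=g$.

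For the group law, write $\flow_{t_1+t_2}=\flow_{t_1}\circ\flow_{t_2}$ using \eqref{identity_and_compability_of_flows}. Then
\begin{equation*}
\flow_{t_1+t_2}\circ\Phi_g\circ\flow_{t_1+t_2}^{-1} = \flow_{t_1}\circ\big(\flow_{t_2}\circ\Phi_g\circ\flow_{t_2}^{-1}\big)\circ\flow_{t_1}^{-1} = \flow_{t_1}\circ\Phi_{\sigma_{t_2}(g)}\circ\flow_{t_1}^{-1} = \Phi_{\sigma_{t_1}(\sigma_{t_2}(g))}.
\end{equation*}
Applying $\hat\Phi^{-1}$ gives $\sigma_{t_1+t_2}(g)=\sigma_{t_1}\circ\sigma_{t_2}(g)$. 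Equivalently, both automorphisms make $\flow_{t_1+t_2}$ weakly invariant, so they agree by the uniqueness remark following Proposition~\ref{proposition_sigma_is_an_automorphism}.

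The main obstacle is smoothness of $\sigma$ jointly in $(t,g)$. The proof of Proposition~\ref{proposition_sigma_is_an_automorphism} only argues smoothness in $g$, and $\hat\Phi^{-1}$ is defined on a subset of the infinite-dimensional group $\Diff(M)$, so it cannot simply be composed with. I would first fix a point and work locally. Consider the smooth map
\begin{equation*}
F:\mathbb{R}\times G\times M\to M, \qquad (t,g,p)\mapsto \flow_t\circ\Phi_g\circ\flow_{-t}(p) = \Phi_{\sigma_t(g)}(p).
\end{equation*}
Effectiveness alone does not yield a smooth left inverse of $h\mapsto\Phi_h$. So the plan is to choose finitely many points $p_1,\dots,p_k$ such that $h\mapsto(\Phi_h(p_1),\dots,\Phi_h(p_k))$ is an injective immersion near each $h$. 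Such points exist locally for a faithful smooth action, since otherwise the infinitesimal generator would have a kernel. Then $\sigma_t(g)$ is recovered smoothly from $F(t,g,p_i)$. Failing this, I would fall back on a softer route. First, for each fixed $t$, $\sigma_t$ is a continuous homomorphism and hence smooth. Second, $t\mapsto\sigma_t$ is a one-parameter subgroup of the Lie group $\Aut(G)$ (taking $G$ connected). If this one-parameter subgroup can be shown to be continuous, for instance by checking continuity of $t\mapsto\sigma_t(g)$ through the orbit maps above, then it is automatically smooth, and the action map $\Aut(G)\times G\to G$ is smooth. Composing these gives the joint smoothness of $(t,g)\mapsto\sigma_t(g)$, which completes the proof that $\sigma$ is a flow on $G$.
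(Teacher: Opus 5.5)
\textbf{There is a genuine gap: joint smoothness of $\sigma$ is not established.} Your verification of $\sigma_0=\id_G$ and $\sigma_{t_1+t_2}=\sigma_{t_1}\circ\sigma_{t_2}$ is exactly the paper's computation. The paper handles smoothness in one line, by declaring every operation in $\sigma_t(g)=\hat\Phi^{-1}(\flow_t\circ\Phi_g\circ\flow_{-t})$ smooth. You are right to distrust that line. For the irrational linear flow of $\mathbb{R}$ on $\mathbb{R}^2/\mathbb{Z}^2$, which is free, $\hat\Phi^{-1}$ is not even continuous on $\hat\Phi(\mathbb{R})\subseteq\Diff(M)$ with the compact-open topology. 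But neither of your routes closes this step.

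In route 1, a smooth local left inverse $r$ of the immersion $\Psi : h\mapsto(\Phi_h(p_1),\dots,\Phi_h(p_k))$ satisfies $r\circ\Psi=\id$ only on a small neighbourhood $U$ of $\sigma_{t_0}(g_0)$. Writing $\sigma_t(g)=r\big(F(t,g,p_1),\dots,F(t,g,p_k)\big)$ near $(t_0,g_0)$ therefore presupposes that $\sigma_t(g)\in U$ there, i.e.\ that $\sigma$ is already continuous. Smoothness of $\Psi\circ\sigma$ does not give you this continuity. First, $\Psi$ need not be injective: $\mathbb{R}$ acting by translation on $\bigsqcup_n \mathbb{R}/2^n\mathbb{Z}$ is effective, yet any finitely many points have a common stabiliser $2^N\mathbb{Z}$. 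Second, even an injective $\Psi$ need not be an embedding, as the orbit map of the irrational flow above shows. Route 2 is explicitly conditional on the same continuity, and it also assumes $G$ connected, which the statement does not.

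One way to finish, at least for connected $G$, is to apply your finite-point idea on $\mathfrak{g}$ rather than on $G$, where it becomes pure linear algebra.
\begin{itemize}
\item Grant, as you do, that each $\sigma_t$ is a continuous, hence Lie, automorphism. Then $\sigma_t(\exp s\xi)=\exp(sA_t\xi)$ with $A_t=\diff(\sigma_t)_e$. Differentiating $\Phi_{\sigma_t(\exp s\xi)}=\flow_t\circ\Phi_{\exp s\xi}\circ\flow_{-t}$ at $s=0$ gives $(A_t\xi)_M=\diff\flow_t\circ\xi_M\circ\flow_{-t}$.
\item Choose $p_1,\dots,p_k$ so that the linear map $E:\eta\mapsto(\eta_M(p_1),\dots,\eta_M(p_k))$ is injective. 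This is possible because $(\cdot)_M$ is injective and $\mathfrak{g}$ is finite-dimensional. With $E^{+}$ any linear left inverse, $A_t\xi=E^{+}\big(\diff\flow_t\circ\xi_M\circ\flow_{-t}(p_i)\big)_{i}$ is smooth in $t$.
\item Hence $(t,\xi)\mapsto\sigma_t(\exp\xi)=\exp(A_t\xi)$ is smooth. Writing $g=u_1\cdots u_m g'$, with $u_1,\dots,u_m$ fixed and $g'$ ranging near $e$, propagates joint smoothness to all of $\mathbb{R}\times G$.
\end{itemize}
You may also want to avoid relying on the equally brief smoothness claim in Proposition~\ref{proposition_sigma_is_an_automorphism}. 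For that, note that $\{(g,h):\Phi_h=\flow_t\circ\Phi_g\circ\flow_{-t}\}$ is, by effectiveness, the graph of $\sigma_t$, and it is a closed subgroup of $G\times G$. The open mapping theorem for second-countable locally compact groups then makes $\sigma_t$ continuous.
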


\begin{proof} 
    It follows from \eqref{definition_of_sigma} that for each $t \in \mathbb{R}$,
        \begin{equation}
                    \sigma_t(g) = \hat{\Phi}^{-1}\big(
        \flow_t \circ \Phi_g \circ \flow_{-t}
        \big),
        \label{definition_of_sigma_for_arbitrary_flow}
        \end{equation}
        and thus  ${\sigma : \mathbb{R} \times G \to G}$ is smooth, since all operations on the right-hand side of \eqref{definition_of_sigma_for_arbitrary_flow} are smooth with respect to both $g$ and $t$. 
        Considering the definition \eqref{identity_and_compability_of_flows}, we verify 
        \begin{equation*}
            \sigma_0(g) = \hat{\Phi}^{-1}\big(
        \flow_0 \circ \Phi_g \circ \flow_{0}
        \big) 
        =\hat{\Phi}^{-1}\big(
        \Phi_g
        \big) = g, 
        \end{equation*}
        and for any ${t_1}, t_2 \in \mathbb{R}$, we have
        \begin{align*}
            \sigma_{t_1} \circ \sigma_{t_2}(g) &= 
            \hat{\Phi}^{-1}\big(
                \flow_{t_1} \circ \Phi_{
                \hat{\Phi}^{-1}(
            \flow_{t_2} \, \circ \, \Phi_g \, \circ \, \flow_{-{t_2}}
            )
                } \circ \flow_{-{t_1}}
            \big)
            \\ &=
\hat{\Phi}^{-1}\big(
                \flow_{{t_1}+{t_2}} \circ \Phi_g \circ \flow_{-({t_1}+{t_2})}
            \big)
            = \sigma_{{t_1}+{t_2}}(g).
            \end{align*}
            This shows that $\sigma : \mathbb{R} \times  G \to G$ is indeed a flow.
\end{proof}

It is thus
natural to say that a weakly $\Phi$-invariant flow on $M$ is, in particular, weakly $\Phi$-invariant with respect to another flow ${\sigma : \mathbb{R} \times G \to G}$. 
In this sense, $\sigma$ can be interpreted as ``dynamics in the symmetry group'' corresponding to the given dynamics in the state space. In particular, the special case of ``strong'' $\Phi$-invariance corresponds 
to the case in which these dynamics are trivial (\textit{i.e.}, stationary).

\subsection{Weak Invariance of Vector Fields}

The final notion of weak invariance we propose in this paper is for vector fields.
In addition to showing a close relationship with group linear vector fields (\textit{i.e.}, those in $\aut(G)$), we will show that weak invariance of vector fields and flows are indeed naturally and intimately related.

\begin{definition}
A vector field ${V \in \X(M)}$ is \textit{weakly $\Phi$-invariant} if there exists a vector field ${W \in \X(G)}$ such that for all ${g \in G}$ and all ${p \in M}$, 
    \label{def:weak_invariance_of_vector_fields}
    \begin{equation}
        \diff \Phi\big( W(g),  V(p) \big) = V \circ \Phi(g,p).
        \label{definition_weak_phi_invariance_vector_field}
    \end{equation}
\end{definition}

As with our definition of weak invariance for diffeomorphisms, this is a strict generalization of the usual notion of (strong) $\Phi$-invariance given in \eqref{strong_phi_invariance_vector_fields}, which is simply the special case of \eqref{definition_weak_phi_invariance_vector_field} in which $W$ is the zero vector field on $G$. 
Although we have once again imposed no explicit structural assumptions on the vector field $W$, 
we nonetheless obtain the following result, analogous to Proposition~\ref{proposition_sigma_is_an_automorphism}.

\begin{proposition}\label{prop:weak_invariance_implies_group_linear}
    Let ${\Phi : G \times M \to M}$ be an effective group action and ${V \in \X(M)}$ be a vector field that is weakly $\Phi$-invariant with respect to ${W \in \mathfrak{X}(G)}$. Then, $W \in \aut(G)$.
\end{proposition}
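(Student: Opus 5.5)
The plan is to compare two ways of evaluating $V$ at the point $\Phi(gh,p) = \Phi\big(g,\Phi(h,p)\big)$. The first way applies \eqref{definition_weak_phi_invariance_vector_field} once, at the group element $gh$. The second applies it twice, first at $g$ and then at $h$. Both must agree, and the defect between them turns out to be a tangent vector at $gh$ that acts trivially on $M$; effectiveness then forces it to vanish.

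Fix $g,h \in G$ and $p \in M$. Applying \eqref{definition_weak_phi_invariance_vector_field} at $(gh,p)$ gives
\begin{equation*}
V\big(\Phi(gh,p)\big) = \diff\Phi\big(W(gh),V(p)\big).
\end{equation*}
Applying it at $\big(g,\Phi(h,p)\big)$ and then again at $(h,p)$ gives
\begin{equation*}
V\big(\Phi(gh,p)\big) = \diff\Phi\Big(W(g),\, \diff\Phi\big(W(h),V(p)\big)\Big).
\end{equation*}
Next I differentiate the identity $\Phi(gh,p) = \Phi\big(g,\Phi(h,p)\big)$, viewed as a map $G\times G\times M\to M$, along the tangent vector $\big(W(g),W(h),V(p)\big)$. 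By the chain rule, and because the differential of multiplication is $(u,v)\mapsto \diff\Right_h u + \diff\Left_g v$, the right-hand side of the second display equals
\begin{equation*}
\diff\Phi\big(Z,V(p)\big), \qquad Z := \diff\Right_h W(g) + \diff\Left_g W(h) \in \T_{gh}G.
\end{equation*}
Subtracting this from the first display and using linearity of $\diff\Phi$ at $(gh,p)$, the vector $\delta := W(gh) - Z \in \T_{gh}G$ satisfies $\diff\Phi^p(\delta) = 0$ for every $p\in M$.

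The remaining, and main, step is to show that this forces $\delta=0$; this is where effectiveness enters. Write $\delta = \diff\Right_{gh}\,\xi$ for a unique $\xi\in\T_eG\simeq\mathfrak{g}$. Since $\Phi^p\circ\Right_{gh} = \Phi^{\Phi(gh,p)}$, we get
\begin{equation*}
0 = \diff\Phi^p(\delta) = \ddt\, \Phi\big(\flow^\xi_t(e),\, \Phi(gh,p)\big)\big|_{t=0} = \xi_M\big(\Phi(gh,p)\big).
\end{equation*}
Because $\Phi_{gh}$ is a diffeomorphism and $p$ is arbitrary, $\xi_M \equiv 0$ on $M$. Then each curve $t\mapsto \Phi\big(\flow^\xi_t(e),q\big)$ has derivative $\xi_M$ evaluated along the curve (by the action property), which is zero, so the curve is constant. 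Hence $\Phi_{\flow^\xi_t(e)} = \id_M$ for all $t$. Effectiveness gives $\flow^\xi_t(e) = e$ for all $t$, and differentiating at $t=0$ yields $\xi = 0$.

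Therefore $\delta=0$, i.e.\ $W(gh) = \diff\Left_g W(h) + \diff\Right_h W(g)$, which is exactly \eqref{definition_of_group_linear_vector_field}, so $W\in\aut(G)$. The main care is needed in the last paragraph: justifying that vanishing of the infinitesimal generator implies a trivial action of the one-parameter subgroup. The chain-rule bookkeeping for $\diff$ of multiplication is routine.
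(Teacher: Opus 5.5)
Your proof is correct and follows essentially the paper's route. Both arguments reduce group-linearity to the equality $\diff\Phi^p\big(W(gh)\big)=\diff\Phi^p\big(\diff\Left_g W(h)+\diff\Right_h W(g)\big)$ for all $p$, obtained by applying weak invariance at $g$, $h$, and $gh$ together with the action law, and then conclude using injectivity of $(\cdot)_M$, which follows from effectiveness. The differences are only in packaging: your chain-rule differentiation of $\Phi(gh,p)=\Phi(g,\Phi(h,p))$ replaces the paper's term-by-term cancellation, you right-trivialize where the paper left-trivializes, and your one-parameter-subgroup argument proves the injectivity of $(\cdot)_M$ that the paper simply asserts.
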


\begin{proof}
Since $\Phi$ is effective, the infinitesimal generator map ${(\cdot)_M : \mathfrak{g} \to \X(M)}$ is injective. Moreover, 
${\xi_M(p) = \diff \Phi^p(\xi)}$ for all ${\xi \in \mathfrak{g}}$ and all ${p \in M}$, and ${\diff \Left_{(gh)^{-1}}}$ is a diffeomorphism for all ${g, h \in G}$. Thus, from \eqref{definition_of_group_linear_vector_field},
it will suffice to show that for all $p \in M$ and all $g,h \in G$, 
\begin{equation*}
    \begin{aligned}
     \diff & \Phi^p   \circ \diff \Left_{(gh)^{-1}} \big( W(gh) \big)
     \\ &= 
     \diff \Phi^p \circ
     \diff \Left_{(gh)^{-1}}
     \big( \diff \Left_g \circ W(h) 
     + \diff \Right_h \circ W(g) \big),
\end{aligned} 
\end{equation*}
or equivalently that
\begin{equation}
    \diff \Phi^p \circ W(gh) = \diff \Phi^p  \big( \diff \Left_g \circ W(h) 
     + \diff \Right_h \circ W(g) \big),
     \label{target_to_be_shown_aut_G}
\end{equation} 
where we obtained \eqref{target_to_be_shown_aut_G} using the identity
\begin{equation}
    {\diff \Phi^p \circ \diff \Left_g = \diff \Phi_g \circ \diff \Phi^p}.
    \label{left-translation-identity}
\end{equation}
Then, using the identity $\diff \Phi^p \circ \diff \Right_h = \diff \Phi^{\Phi(h,p)}$, the right-hand side of \eqref{target_to_be_shown_aut_G} can be rewritten as
\begin{equation}
    \begin{aligned}
     \diff  \Phi^p&  \big( \diff \Left_g \circ W(h) 
     + \diff \Right_h \circ W(g) \big) 
     \\ &
         =
     \diff \Phi_g \circ \diff \Phi^p  \circ W(h) 
     + \diff \Phi^{\Phi(h,p)}
     \circ W(g),
     \label{rhs_using_identity}
\end{aligned}
\end{equation}
Since $V$ is weakly $\Phi$-invariant with respect to $W$, we may 
rearrange \eqref{definition_weak_phi_invariance_vector_field}
to show that for all $g \in G$ and any $p \in M$, 
    \begin{align}
     \diff \Phi^p \circ W(g) = V \circ \Phi(g,p) - \diff \Phi_{g} \circ V(p).
     \label{implicit_W_definition}
\end{align}
Using \eqref{implicit_W_definition} to rewrite both terms on the right-hand side of \eqref{rhs_using_identity} and then canceling like terms, we may verify that
\begin{align*}
     \diff \Phi_g & \circ \diff \Phi^p  \circ W(h) 
     + \diff \Phi^{\Phi(h,p)}
     \circ W(g)
     \\ 
       &
       \begin{aligned}
                =\diff \Phi_g  \big( 
     V &\circ \Phi(h,p) - \diff \Phi_{h} \circ V(p)
     \big)
      \ + \\ 
     &\big(
     V \circ  \Phi(gh,p) - \diff \Phi_{g} \circ V \circ \Phi(h,p)
     \big)
       \end{aligned}
    \\    &=
     V \circ \Phi_{gh}(p) - \diff \Phi_{gh} \circ V(p)
     = \diff \Phi^p \circ W(gh),
\end{align*}
where the last equality above relies again on \eqref{implicit_W_definition}.
Having thus verified \eqref{target_to_be_shown_aut_G}, this completes the argument.
\end{proof}

    Just as for weakly $\Phi$-invariant diffeomorphisms,  if $\Phi$ is effective, 
    a given vector field may be weakly $\Phi$-invariant with respect to only a single vector field ${W \in \X(G)}$ (\textit{i.e.}, $W$ is unique). To show this, we note that vector fields on Lie groups are in unique correspondence with their left trivializations, which are given for each $W \in \X(G)$  by 
    \begin{equation}
        \xi^W : G \to \mathfrak{g}, \ g  \mapsto \diff \Left_{g^{-1}} \circ W(g).
        \label{left_trivialized_vector_field}
    \end{equation}
If $V$ is weakly $\Phi$-invariant with respect to $W$, we may re-express \eqref{definition_weak_phi_invariance_vector_field} using \eqref{left-translation-identity} and \eqref{left_trivialized_vector_field} to conclude that
\begin{equation}
    \diff \Phi^p \circ \xi^W(g)  = \diff \Phi_{g^{-1}} \circ V \circ \Phi_g(p) -  V(p).
    \label{almost_in_infintesimal_generator_form}
\end{equation}
Recognizing the left-hand side of \eqref{almost_in_infintesimal_generator_form} as the infinitesimal generator of $\Phi$ applied to ${\xi^W(g) \in \mathfrak{g}}$ yields 
\begin{equation}
    \big(\xi^W(g)\big)_M  = \diff \Phi_{g^{-1}} \circ V \circ \Phi_g -  V. 
    \label{infinitesimal_generator_definition_of_W}
\end{equation}
Since ${(\cdot)_M : \mathfrak{g} \to \X(M)}$ is injective, \eqref{infinitesimal_generator_definition_of_W} uniquely determines $W$ in terms of $V \in \X(M)$ and the group action $\Phi$. 

\begin{remark}
In view of \eqref{infinitesimal_generator_definition_of_W}, it is informative to study the ``residual'' 
$\Delta^V : G \to \X(M)$ of any $V \in \X(M)$, given by 
    \begin{equation}
        \Delta^V : g \mapsto \diff \Phi_{g^{-1}} \circ V \circ \Phi_{g} - V.
    \end{equation}
    This residual detects the failure of $V$ to be (strongly) $\Phi$-invariant at each ${g \in G}$, since 
    $\Delta^V(g)$ is the zero vector field on $M$ for all ${g \in G}$
    precisely when $V$ is (strongly) $\Phi$-invariant.
    Conversely,
    the residual at each ${g \in G}$ 
    of a weakly $\Phi$-invariant vector field
    is given by the infinitesimal generator of $\Phi$ corresponding to $\xi^W(g) \in \mathfrak{g}$, namely,
    \begin{equation}
        \Delta^V(g) = \big(\xi^W(g)\big)_M,
        \label{weak_invariance_residual}
    \end{equation}
    whereas the residual of a vector field that is \textit{not} weakly $\Phi$-invariant will take a more general form. In fact, the notion of partial symmetry proposed in \cite{nijmeijer1985partial} (specialized to the case of vector fields) simply requires this residual to be tangent to the orbits of $\Phi$, whereas our notion of weak invariance requires the residual to take the more structured form \eqref{weak_invariance_residual}.
\end{remark}

We close this section with the following theorem characterizing the relationship between weakly invariant flows and their generating vector fields.

\begin{theorem}
    \label{weak_phi_invariance_if_and_only_if}
    Let 
    ${\Phi : G \times M \to M}$ 
    be an effective group action, ${V \in \X(M)}$ be 
    complete, and $G$ be connected%
    . Then, 
    $V$ is weakly $\Phi$-invariant (with respect to ${W}$)
    if and only if 
    ${\flow^V}$ is weakly $\Phi$-invariant 
    (with respect to $\flow^W$)%
    .
\end{theorem}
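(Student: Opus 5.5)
We prove both directions by working with the single map $\Psi:\mathbb{R}\times G\times M\to M$ and comparing $\Phi(\sigma_t(g),\flow^V_t(p))$ with $\flow^V_t(\Phi(g,p))$.

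\emph{($\Rightarrow$)} Suppose $V$ is weakly $\Phi$-invariant with respect to $W$. By Proposition~\ref{prop:weak_invariance_implies_group_linear}, $W\in\aut(G)$. We first need $W$ to be complete. A group linear vector field satisfies $W(e)=0$, and its flow, where defined, consists of local automorphisms. The standard argument is that $W$ is $\Left\times\Right$-related to the product field $(W,W)$ on $G\times G$, by \eqref{definition_of_group_linear_vector_field}. If $\flow^W_t(g)$ and $\flow^W_t(h)$ exist on $[0,T)$, then $\flow^W_t(g)\flow^W_t(h)$ solves the equation through $gh$. A uniform existence time near $e$, guaranteed since $W(e)=0$, together with connectedness of $G$ (every element is a finite product of elements near $e$) then yields completeness.

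Now fix $g,p$ and define the curve
\begin{equation*}
\gamma(t)=\Phi\big(\flow^W_t(g),\flow^V_t(p)\big).
\end{equation*}
By the chain rule, $\dot\gamma=\diff\Phi\big(W(\flow^W_t(g)),V(\flow^V_t(p))\big)$, which equals $V\circ\gamma(t)$ by \eqref{definition_weak_phi_invariance_vector_field}. Since $\gamma(0)=\Phi(g,p)$, uniqueness of integral curves gives $\gamma(t)=\flow^V_t(\Phi(g,p))$, that is, $\Phi_{\flow^W_t(g)}\circ\flow^V_t=\flow^V_t\circ\Phi_g$. Each $\flow^W_t$ is a diffeomorphism, hence surjective, so $\flow^V_t$ is weakly $\Phi$-invariant with $\sigma_t=\flow^W_t$.

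\emph{($\Leftarrow$)} Suppose $\flow^V$ is weakly $\Phi$-invariant with respect to some $\sigma_t$. By Proposition~\ref{proposition_sigma_t_is_a_flow}, $\sigma$ is a smooth flow on $G$, and we let $W(g):=\ddt\sigma_t(g)|_{t=0}$ be its generator, so that $\sigma=\flow^W$. Differentiating the identity $\Phi(\sigma_t(g),\flow^V_t(p))=\flow^V_t(\Phi(g,p))$ at $t=0$ gives $\diff\Phi(W(g),V(p))=V(\Phi(g,p))$, which is exactly \eqref{definition_weak_phi_invariance_vector_field}. So $V$ is weakly invariant with respect to $W$, and $\flow^W=\sigma$ as claimed. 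Uniqueness of $W$, noted after Proposition~\ref{prop:weak_invariance_implies_group_linear}, and uniqueness of $\sigma_t$, noted after Proposition~\ref{proposition_sigma_is_an_automorphism}, ensure that the pairings ``$V$ with $W$'' and ``$\flow^V$ with $\flow^W$'' are consistent in both directions.

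\emph{Main obstacle.} The delicate step is completeness of $W$ in the forward direction; everything else is chain rule and uniqueness of ODE solutions. If the product argument proves awkward, an alternative uses connectedness to pass to the universal cover and the isomorphism $\aut(G)\cong$ derivations of $\mathfrak g$. The flow of a group linear field is then $t\mapsto$ the automorphism integrating $e^{tD}$, where $D=-\mathrm{ad}$-type derivation induced by $W$, and it exists for all time. One can also avoid needing completeness a priori: the identity above holds on the maximal interval of $\flow^W_t(g)$. On that interval, \eqref{definition_of_sigma_for_arbitrary_flow} expresses $\flow^W_t(g)=\hat\Phi^{-1}(\flow^V_t\circ\Phi_g\circ\flow^V_{-t})$, whose right side is defined for all $t$. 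This bounds the solution in $G$, so it cannot escape in finite time. This last route is the one I would try first, as it uses only the effectiveness of $\Phi$ and the completeness of $V$.
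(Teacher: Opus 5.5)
Your proof is correct and is essentially the paper's. The forward direction compares $\Phi(\flow^W_t(g),\flow^V_t(p))$ with $\flow^V_t\circ\Phi(g,p)$ via uniqueness of integral curves, and the converse differentiates the conjugation identity at $t=0$. There are two minor differences. First, you prove completeness of $W$ yourself: by \eqref{definition_of_group_linear_vector_field}, $(W,W)$ is related to $W$ under multiplication $G\times G\to G$; $e$ is an equilibrium; and connectedness writes every $g$ as a product of elements near $e$. The paper instead cites \cite{Jouan_2010} via a lift to $\Aut(G)$. Second, in the converse you start from an arbitrary $\sigma_t$ and obtain $W$ from Proposition~\ref{proposition_sigma_t_is_a_flow}.

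One caution: the third route for completeness, which you say you would try first, does not work as written. The map $\hat{\Phi}^{-1}$ is only defined on $\hat{\Phi}(G)$. So claiming that the right side of \eqref{definition_of_sigma_for_arbitrary_flow} is defined for all $t$ presupposes $\flow^V_t\circ\Phi_g\circ\flow^V_{-t}\in\hat{\Phi}(G)$ beyond the existence interval of $\flow^W_t(g)$. Establishing that already needs the generation-near-$e$ argument of your first route. Even granting it, $\hat{\Phi}^{-1}$ need not be continuous when $\hat{\Phi}(G)$ is not closed in $\Diff(M)$. For the effective irrational linear $\mathbb{R}$-action on the torus, $\Phi_{t_n}\to\id_M$ along a sequence $t_n\to\infty$. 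Hence convergence of the conjugates gives no bound on $\flow^W_t(g)$ in $G$, and you should keep the product argument. Also, in your second route the derivation of $\mathfrak{g}$ induced by $W$ is in general not inner, so ``$-\mathrm{ad}$-type'' should just read ``derivation''.
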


\begin{proof}
    To show necessity, we first assume that $V$ is weakly $\Phi$-invariant (as in Def.~\ref{def:weak_invariance_of_vector_fields}) with respect to ${W \in \X(G)}$.
    Considering any $g \in G$ and any $q \in M$, we define 
    a curve 
    \begin{align}
        p_1
        &: 
        \mathbb{R} \to M, \ 
        t \mapsto
        \flow_t^V \circ \Phi(g,q).
        \label{p1_definition}
    \end{align}
    Differentiating $p_1$ and using \eqref{flow_definition}, we obtain
    \begin{equation*}
        \ddt \big(p_1(t)\big) = 
        \ddt \big(
        \flow^V_t \circ \Phi(g,q)
        \big) = V \circ 
        \underbrace{\flow^V_t \circ \Phi(g,q)}_{=p_1(t)},
    \end{equation*}
    and evaluating $p_1$ at $t=0$, we verify that
    \begin{equation*}
         p_1(0) = \flow_0^V \circ \Phi(g,q) = \Phi(g,q).
    \end{equation*}
    Thus,  $p_1$ is a solution to the initial value problem (IVP)
    \begin{equation}
        \dot p(t) = V\big(p(t)\big), \quad p(0) = \Phi(g,q).
        \label{weakly_invariant_IVP}
    \end{equation}
We now define another curve given by
\begin{align}
        \label{p2_definition}
        p_2
        &: 
        \mathbb{R} \to M, \ 
        t \mapsto
        \Phi\big(\flow_t^W(g), \flow_t^V(q) \big),
\end{align}
where we recall that since $G$ is connected, any ${W \in \aut(G)}$ is complete \cite{Jouan_2010}, since it can be lifted to a left-invariant vector field on $\Aut(G)$. 
Differentiating $p_2$, we obtain
    \begin{align*}
        \ddt \big(p_2(t)\big) &= 
        \ddt \Big(
        \Phi\big(\flow_t^W(g), \flow_t^V(q) \big)
        \Big) 
        \\ &= 
        \diff \Phi \Big(
        \underbrace{\ddt \flow_t^W(g)}_{= W \circ \flow^W_t(g)}, \underbrace{\ddt \flow_t^V(q)}_{= V \circ \flow^V_t(q)}
        \Big)
        \\ &= 
        V \circ \underbrace{\Phi\big(
\flow^W_t(g),\flow^V_t(q)
        \big)}_{=p_2(t)},
    \end{align*}
    where the last step relies on the weak $\Phi$-invariance of $V$.
    Evaluating $p_2$ at $t=0$, we verify that
    \begin{equation*}
        p_2(0) = \Phi\big(\flow^W_0(g),\flow_0^V(q)\big) = \Phi(g,q),
    \end{equation*}
   showing that $p_2$ is also a solution to the same IVP \eqref{weakly_invariant_IVP}. 
    Since initial value problems have unique solutions, 
    ${p_1 = p_2}$. In view of 
    \eqref{p1_definition}, \eqref{p2_definition}, and
    \eqref{definition_weak_phi_invariance_of_maps},
    this shows necessity.
    
    To show sufficiency, we now assume that
        $\flow^V$ is weakly $\Phi$-invariant with respect to ${\flow^W}$ (as in Def.~\ref{definition_weak_invariance_of_flows}).
       This 
       means that
       for any ${g \in G}$, ${q \in M}$, and $t \in \mathbb{R}$, 
       \begin{equation*}
           \Phi \big(\flow^W_t(g), \flow^V_t(q)\big)
           =
           \flow^V_t \circ \Phi(g,q).
       \end{equation*}
       Differentiating both sides with respect to $t$ (using the chain rule on the left-hand side) and applying \eqref{flow_definition} yields
        \begin{equation*}
            \diff \Phi \big(
            W \circ \flow^W_t(g)
            ,
            V \circ \flow^V_t(q)\big)
            = 
            V \circ \flow^V_t \circ \Phi(g,q).
        \end{equation*}
        Evaluating this equality at $t = 0$ 
        yields
        precisely \eqref{definition_weak_phi_invariance_vector_field}%
        .
\end{proof}

    The assumption that $G$ is connected can be relaxed considerably (\textit{cf.} \cite{Hochschild1952_automorphism}). Also,
    Fact~\ref{standard_phi_invariance_if_and_only_if} (relating strongly $\Phi$-invariant vector fields and their flows)
    can be recovered as a 
    corollary of Theorem~\ref{standard_phi_invariance_if_and_only_if} by taking $V$ to be weakly $\Phi$-invariant with respect to 
    ${0_{\X(G)} \in \aut(G)}$
    and $\flow^V_t$ to be weakly $\Phi$-invariant with respect to ${\id_G \in \Aut(G)}$ for all $t \in \mathbb{R}$.

\section{Cascade Decompositions of \\ Weakly Invariant Dynamical Systems}

\label{sec:cascade_decompositions}

In this section, we consider ${\Phi : G \times M \to M}$ to be a free and proper group action. Thus,
the projection ${\pi : M \to M/G}$, which maps each point ${p \in M}$ to its orbit $\Phi_G(p)$, is a surjective submersion.
In this setting, we show that any weakly invariant vector field can be decomposed into an independent vector field on the quotient manifold $M / G$, which cascades into highly structured dynamics along the orbits.
The following lemma is well-known---in fact, it is a special case of \cite[Prop. 2.7]{nijmeijer1985partial} (see also \cite[Prop. 3.2]{Absil2007}). 

\begin{lemma}\label{lem:induced_vector_field}
Let ${V \in \mathfrak{X}(M)}$ be weakly $\Phi$-invariant%
.
Then $V$ induces a vector field ${\tilde{V} \in \mathfrak{X}(M/G)}$ 
such that for all $p \in M$,
\begin{align}
    \tilde{V}\big(\pi(p)\big) = \diff \pi \circ V(p).
    \label{reduced_Vector_field_definition}
\end{align}
\end{lemma}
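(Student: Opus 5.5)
The plan is to show that $\diff\pi \circ V$ is constant along the fibers of $\pi$, and then to use the fact that $\pi$ is a surjective submersion to descend it to a smooth vector field on $M/G$.

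\textbf{Step 1: fiber-constancy.} Since $\Phi$ is free and proper, $\pi : M \to M/G$ is a surjective submersion whose fibers are exactly the orbits, and $\pi \circ \Phi_g = \pi$ for every $g \in G$. Fix $p \in M$ and $g \in G$. Apply $\diff\pi$ to the defining relation \eqref{definition_weak_phi_invariance_vector_field}:
\begin{equation*}
    \diff\pi \circ V\big(\Phi(g,p)\big) = \diff\pi \circ \diff\Phi\big(W(g), V(p)\big).
\end{equation*}
Next split $\diff\Phi(W(g), V(p)) = \diff\Phi^p \circ W(g) + \diff\Phi_g \circ V(p)$. The first term is tangent to the orbit through $\Phi(g,p)$, because $\pi \circ \Phi^p$ is the constant map $\pi(p)$, so $\diff\pi \circ \diff\Phi^p = 0$. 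For the second term, $\diff\pi \circ \diff\Phi_g = \diff(\pi\circ\Phi_g) = \diff\pi$. Together these give
\begin{equation*}
    \diff\pi \circ V\big(\Phi(g,p)\big) = \diff\pi \circ V(p),
\end{equation*}
so $\diff\pi \circ V$ takes the same value at all points of each orbit. This is exactly the residual remark above: the residual lies in the orbit directions, so $V$ has a partial symmetry in the sense of \cite{nijmeijer1985partial}.

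\textbf{Step 2: descent.} Define $\tilde V(\pi(p)) := \diff\pi \circ V(p)$. By Step 1 this is well defined, since any two preimages of a point in $M/G$ lie on the same orbit. By construction $\tilde V(\pi(p)) \in \T_{\pi(p)}(M/G)$.

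\textbf{Step 3: smoothness.} Since $\pi$ is a submersion, it admits smooth local sections $s : U \to M$ around every point of $M/G$. On each such $U$ we have $\tilde V = \diff\pi \circ V \circ s$, which is smooth. Alternatively, $\diff\pi \circ V$ is a smooth map $M \to \T(M/G)$ that is constant on fibers, and it factors smoothly through the surjective submersion $\pi$.

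I expect the main care to be needed in Step 1, specifically in the splitting of $\diff\Phi$ on the product $\T G \times \T M$. It must be written as the sum of the partial differentials $\diff\Phi^p$ and $\diff\Phi_g$, each evaluated at the base point $(g,p)$. The remaining steps are standard facts about quotients by free and proper actions.
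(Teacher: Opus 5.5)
Your proof is correct and follows the paper's argument: the paper also writes $V\circ\Phi_g(p) = \diff\Phi_g \circ V(p) + \diff\Phi^{p}\circ W(g)$ and observes that the second term is tangent to the orbit, so $\diff\pi$ annihilates it. Your explicit use of $\diff\pi\circ\diff\Phi_g = \diff\pi$ and your local-section argument for the smoothness of $\tilde V$ fill in details that the paper leaves implicit.
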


\begin{proof}
It suffices to show that ${\diff \pi \circ V(p_1) = \diff \pi \circ V(p_2)}$ whenever ${\pi(p_1) = \pi(p_2)}$ (\textit{i.e.}, whenever 
there exists $g \in G$ such that ${\Phi_g(p_1) = p_2}$).
Since $V$ is weakly $\Phi$-invariant,
\begin{align*}
    V(p_2) 
    &= V \circ \Phi_g(p_1) 
    = \diff \Phi_g \circ V(p_1) + \diff \Phi^{p_1} \circ W(g).
\end{align*}
Since the second term here is tangent to the orbit of $p_1$, it follows that  $\diff \pi \circ V(p_2) = \diff \pi \circ V(p_1)$, as required.
\end{proof}

While the previous lemma characterizes the system's ``reduced dynamics'' in the quotient space (\textit{i.e.}, from orbit to orbit), our next result describes its evolution 
\textit{along} the orbits.

\begin{theorem}\label{thm:decomposition}
Let ${V \in \mathfrak{X}(M)}$ be weakly $\Phi$-invariant with respect to ${W \in \aut(G)}$, and suppose that 
$\pi$
admits a global section ${s : M/G \to M}$.
If ${p(t) \in M}$ is a curve satisfying ${\dot{p} = V(p)}$, then its decomposition ${p = \Phi_g \circ s(y)}$ 
 evolves as
\begin{subequations}
    \begin{align}
    \dot{y} &= \tilde{V}(y), \label{cascade_quotient} \\
    \dot{g} &= W(g) + \diff \Left_g \circ \hat{V}(y) \label{cascade_orbits},
\end{align}
\end{subequations}
where the map $\hat{V} : M/G \to \mathfrak{g}$ is defined by
\begin{align}
    \hat{V}(y) &= \big(\diff \Phi^{s(y)}\big)^{-1} \big(V \circ s(y) -  \diff s \circ \tilde{V}(y)\big).
    \label{forcing_term_group_affine_along_orbits}
\end{align}
\end{theorem}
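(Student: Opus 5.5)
The plan is to differentiate the decomposition $p(t) = \Phi\big(g(t), s(y(t))\big)$ directly and compare with $V(p)$, using the weak invariance identity \eqref{definition_weak_phi_invariance_vector_field} to move $V$ from the point $\Phi_g \circ s(y)$ back to the section point $s(y)$.

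\emph{Well-posedness of the decomposition.} Because the action is free and proper and $s$ is a global section, the map $(g,y) \mapsto \Phi_g \circ s(y)$ is a diffeomorphism $G \times M/G \to M$. Hence $y(t) = \pi(p(t))$ and $g(t)$ are smooth curves. Freeness makes $\diff \Phi^{s(y)} : \T_e G \to \T_{s(y)} M$ injective, with image the tangent space to the orbit. For \eqref{forcing_term_group_affine_along_orbits} to make sense I must check that $V\circ s(y) - \diff s \circ \tilde V(y)$ lies in that image. This follows from applying $\diff \pi$: since $\pi \circ s = \id$, Lemma~\ref{lem:induced_vector_field} gives $\diff\pi\circ V(s(y)) = \tilde V(y) = \diff \pi \circ \diff s \circ \tilde V(y)$, so the difference lies in $\ker \diff\pi$, which is exactly the orbit tangent space.

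\emph{Quotient equation.} Since $\pi(p) = y$ and $\pi\circ\Phi_g = \pi$, we have $\dot y = \diff\pi(\dot p) = \diff\pi\circ V(p) = \tilde V(\pi(p)) = \tilde V(y)$ by Lemma~\ref{lem:induced_vector_field}. This gives \eqref{cascade_quotient}.

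\emph{Group equation.} Differentiating $p = \Phi(g, s(y))$ gives
\[
\dot p = \diff\Phi^{s(y)}(\dot g) + \diff\Phi_g\circ\diff s\circ\tilde V(y).
\]
On the other hand, weak invariance gives
\[
V(p) = V\circ\Phi(g,s(y)) = \diff\Phi^{s(y)} W(g) + \diff\Phi_g\, V(s(y)).
\]
Setting these equal and substituting $V(s(y)) = \diff s\,\tilde V(y) + \diff\Phi^{s(y)}\hat V(y)$, the $\diff s\,\tilde V$ terms cancel, leaving
\[
\diff\Phi^{s(y)}(\dot g) = \diff\Phi^{s(y)} W(g) + \diff\Phi_g\circ\diff\Phi^{s(y)}\hat V(y).
\]

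The main obstacle is rewriting $\diff\Phi_g\circ\diff\Phi^{s(y)}$ in terms of the group variable. I will use the identity \eqref{left-translation-identity}, which says $\diff\Phi_g\circ\diff\Phi^{s(y)} = \diff\Phi^{s(y)}\circ\diff\Left_g$. This turns the last term into $\diff\Phi^{s(y)}\big(\diff\Left_g\hat V(y)\big)$. Finally, I cancel $\diff\Phi^{s(y)}$, which is injective on all of $\T_gG$ because $\Phi^{s(y)}$ is an injective immersion for a free action. This yields \eqref{cascade_orbits}.
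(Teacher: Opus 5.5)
Your proposal is correct and follows essentially the same route as the paper: you differentiate $p=\Phi(g,s(y))$, use weak invariance to move $V$ back to $s(y)$, commute $\diff\Phi_g$ past $\diff\Phi^{s(y)}$ via \eqref{left-translation-identity}, and cancel the injective $\diff\Phi^{s(y)}$. You also check explicitly that $V\circ s(y)-\diff s\circ\tilde V(y)\in\ker\diff\pi$, so that $\hat V$ is well defined, and that $(g,y)\mapsto\Phi_g\circ s(y)$ is a diffeomorphism; these are details the paper's proof leaves implicit.
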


\begin{proof}[Proof]
It follows immediately from Lemma~\ref{lem:induced_vector_field} that ${\dot{y} = \tilde{V}(y)}$. 
It remains to derive the dynamics of $g$ described in \eqref{cascade_orbits}.
Substituting the decomposition ${p = \Phi_g \circ s(y)}$ into $\dot{p} = V(p)$, we have
\begin{equation*}
    \ddt \Phi\big(g,s(y)\big) = V \circ \Phi_g \circ s(y).
\end{equation*}
Applying the chain rule on the left-hand side and using the weak $\Phi$-invariance of $V$ on the right-hand side yields
\begin{equation*}
    \diff \Phi \big(\dot{g}, \diff s (\dot{y})\big)
    = 
    \diff \Phi \big(W(g), V \circ s(y)\big).
\end{equation*}
Expanding both sides and rearranging terms, we obtain
\begin{equation*}
    \diff \Phi^{s(y)} \big(\dot{g} - W(g) \big)
    =
    \diff \Phi_g \big( V \circ s(y) -  \diff s (\dot{y})\big).
\end{equation*}
Since $\Phi$ is free, the orbit map $ {\Phi^p : G \to M}$ is an injective immersion for any ${p \in M}$, and therefore $\diff \Phi^{s(y)}$ is invertible over its image for any ${y \in M/G}$. 
Thus,
\begin{equation*}
    \dot{g}
    =
    W(g) + \big(\diff \Phi^{s(y)} \big)^{-1} \circ \diff \Phi_g \big( V \circ s(y) -  \diff s (\dot{y})\big).    
\end{equation*}
Finally, it follows from \eqref{left-translation-identity} and the fact that ${\dot{y} = \tilde{V}(y)}$ that 
\begin{equation*}
    \dot{g}
    =
    W(g) + \diff \Left_g \circ \big(\diff \Phi^{s(y)} \big)^{-1}  \big( V \circ s(y) -  \diff s \circ \tilde{V} (y) \big). \qedhere
\end{equation*} 
\end{proof}

\begin{remark}
    Within the setting of dynamical systems (lacking inputs), the above result establishes a middle ground between the cascade decompositions previously obtained in the more general setting of partial symmetry \cite[Sec.~2]{nijmeijer1985partial} 
    and the less general setting of (strong) invariance \cite[Thm. 4.2]{Grizzle1985}.
    In particular, specializing those prior results (formulated in the more general setting of systems with inputs) to dynamical systems, \cite[Sec.~2]{nijmeijer1985partial} would replace
    \eqref{cascade_orbits} with ${\dot{g} = h(g,y)}$, thus sacrificing any structure besides that of the cascade. On the other hand, Theorem~\ref{thm:decomposition} and \eqref{cascade_orbits} still apply (in particular) to the situation studied in \cite[Thm. 4.2]{Grizzle1985} (since strong invariance is simply the special case of weak invariance with ${W \equiv 0}$), in which case \eqref{cascade_orbits} becomes a left-invariant vector field for each (fixed) value of ${y \in M/G}$.
\end{remark}

\subsection{Group Affine Dynamical Systems}

An important corollary of Theorem~\ref{thm:decomposition} pertains to vector fields on a Lie group $G$.
Specifically, we show that the class of weakly left-invariant dynamical systems on Lie groups is exactly the class of group affine dynamical systems.
However, we stress that weak invariance is much more general; while group affine systems are strictly defined on Lie groups, weak invariance is defined for systems on arbitrary manifolds and does not require a transitive group action.

As introduced in \cite{Barrau2017}, a vector field  ${V \in \X(G)}$ on a Lie group $G$ is called \emph{group affine} if for all $g,h \in G$, it satisfies
\begin{align}
\begin{aligned}
        V(gh) = \ \ &\\ \diff \Left_g \circ  V&(h) + \diff \Right_h \circ V(g) - \diff \Left_g \circ \diff \Right_h \circ V(e).
        \label{group_affine_vector_field}
\end{aligned}
\end{align}

\begin{corollary}
A vector field $V \in \X(G)$ is group affine if and only if it is weakly $\Left$-invariant.
\end{corollary}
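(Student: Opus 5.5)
The plan is to work directly with Definition~\ref{def:weak_invariance_of_vector_fields} for the action $\Phi = \Left$ of $G$ on itself, taking $M = G$. Here $\Left^p = \Right_p$, and $\diff \Left$ at a pair $(W(g), V(p))$ splits as $\diff \Right_p \circ W(g) + \diff \Left_g \circ V(p)$. The weak $\Left$-invariance condition with respect to some $W \in \X(G)$ therefore reads
\begin{equation*}
    V(gp) = \diff \Left_g \circ V(p) + \diff \Right_p \circ W(g)
\end{equation*}
for all $g,p \in G$. The action $\Left$ is effective, so the earlier results apply.

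For sufficiency, assume $V$ is group affine as in \eqref{group_affine_vector_field}. The natural candidate is
\begin{equation*}
    W(g) := V(g) - \diff \Right_g \circ V(e),
\end{equation*}
that is, $V$ minus the right-invariant vector field generated by $V(e)$. Substituting gives
\begin{equation*}
    \diff \Right_p \circ W(g) = \diff \Right_p \circ V(g) - \diff \Right_{gp} \circ V(e).
\end{equation*}
Since left and right translations commute, $\diff \Left_g \circ \diff \Right_p = \diff \Right_p \circ \diff \Left_g$, so the displayed weak-invariance identity is exactly \eqref{group_affine_vector_field} with $h = p$. This direction is just bookkeeping.

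For necessity, assume $V$ is weakly $\Left$-invariant with respect to some $W$. There are two routes. The first is to set $p = e$ in the weak-invariance identity, which gives
\begin{equation*}
    V(g) = \diff \Left_g \circ V(e) + W(g), \qquad\text{i.e.}\qquad W(g) = V(g) - \diff \Left_g \circ V(e).
\end{equation*}
Substituting this back, the identity becomes
\begin{equation*}
    V(gp) = \diff \Left_g \circ V(p) + \diff \Right_p \circ V(g) - \diff \Right_p \circ \diff \Left_g \circ V(e),
\end{equation*}
which is \eqref{group_affine_vector_field}. This route does not even need Proposition~\ref{prop:weak_invariance_implies_group_linear}. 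As a consistency check, Proposition~\ref{prop:weak_invariance_implies_group_linear} says $W \in \aut(G)$, and $V - W$ is the left-invariant field generated by $V(e)$. This recovers the known characterization that group affine fields are exactly left-invariant plus group linear fields.

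The only step needing care is the differential of the action, namely $\diff \Left(w, v) = \diff \Right_p\, w + \diff \Left_g\, v$ at $(g,p)$, together with the commutation of left and right translations. There is one apparent discrepancy to resolve. In the sufficiency direction $W$ subtracts a right-invariant field, while in the necessity direction it subtracts a left-invariant one. These agree because $W$ is uniquely determined: evaluating the group-affine identity at $g = e$ shows that $\diff\Left_g\circ V(e)$ and $\diff\Right_g\circ V(e)$ enter consistently, as both candidates satisfy the same defining identity. So I expect no real obstacle, provided the indices $g,h$ are matched correctly to $p$.
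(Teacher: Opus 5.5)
The direction from group affine to weakly $\Left$-invariant fails as written, because your candidate $W(g) := V(g) - \diff\Right_g \circ V(e)$ is the wrong field. With it, the last term of the weak-invariance identity is $\diff\Right_p \circ \diff\Right_g \circ V(e) = \diff\Right_{gp}\circ V(e)$. By contrast, \eqref{group_affine_vector_field} with $h = p$ has $\diff\Left_g \circ \diff\Right_p \circ V(e) = \diff\Right_p \circ \diff\Left_g \circ V(e)$. Commuting left and right translations does not turn one into the other. Since $\diff\Right_p$ is invertible, the two agree for all $g,p$ exactly when $\diff\Left_g \circ V(e) = \diff\Right_g \circ V(e)$ for all $g$, i.e., when $V(e)$ is fixed by the adjoint action. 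Concretely, on $\mathrm{SO}(3)$ the left-invariant field $V(g) = \diff\Left_g\,\xi$ with $\xi \neq 0$ is group affine and is weakly $\Left$-invariant with $W = 0$. Your $W(g) = \diff\Left_g\,\xi - \diff\Right_g\,\xi$, however, is nonzero and leaves the residual $\diff\Right_p \big(\diff\Left_g\,\xi - \diff\Right_g\,\xi\big)$ in the identity. Your closing paragraph also runs the logic backwards. $W$ is unique (your own substitution $p = e$ pins it down), so the left- and right-invariant candidates cannot both satisfy the identity unless they coincide. Uniqueness therefore refutes the right-invariant candidate rather than reconciling it. Moreover, the group-affine identity at $g = e$ is a tautology that says nothing about left versus right translates of $V(e)$.

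The repair is to take the field forced by $p = e$, namely $W(g) := V(g) - \diff\Left_g \circ V(e)$. Then $\diff\Right_p \circ W(g) = \diff\Right_p \circ V(g) - \diff\Left_g \circ \diff\Right_p \circ V(e)$ by the commutation you cite, and the weak-invariance identity is literally \eqref{group_affine_vector_field}. Your other direction (weakly $\Left$-invariant implies group affine) is correct as written. With this fix, your argument is a single self-contained computation and genuinely differs from the paper's. The paper imports the decomposition of group affine fields into group linear plus left-invariant parts from \cite[Thm.~4.3]{vanGoor2021}. It derives the reverse direction by specializing Theorem~\ref{thm:decomposition}, with Proposition~\ref{prop:weak_invariance_implies_group_linear} supplying $W \in \aut(G)$. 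The paper's route buys a conceptual picture, presenting the corollary as the transitive case of the cascade decomposition. Yours shows the equivalence needs neither group linearity of $W$ nor any external characterization.
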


\begin{proof}
For the forward direction, we use the fact \cite[Thm. 4.3]{vanGoor2021} that any group affine vector field can be written as
\begin{align*}
    V(g) = W(g) + \diff \Left_g (U)
\end{align*}
for unique choices of ${W \in \aut(G)}$ and ${U \in \mathfrak{g}}$.
Weak $\Left$-invariance is then verified by checking the definition \eqref{definition_weak_phi_invariance_vector_field}:
\begin{align*}
    V \circ \Left_g(h) 
    &= V(gh) 
    = 
    W(gh) + \diff \Left_{gh} (U) \\
    &= \diff \Left_g \circ W(h) + \diff \Right_h \circ W(g) + \diff \Left_g \circ \diff \Left_{h} (U) \\
    &= \diff \Left^h \circ W(g) + \diff \Left_g \big(W(h) + \diff \Left_{h} (U)\big) \\
    &= \diff \Left\big( W(g), V(h)\big),
\end{align*}
as required, where the third equality relies on 
\eqref{definition_of_group_linear_vector_field}.

For the reverse direction, let ${V \in \mathfrak{X}(G)}$ be weakly $\Left$-invariant with respect to ${W \in \aut(G)}$.
Since $\Left$ is transitive, the quotient space $G / G$ is a zero-dimensional singleton.
Therefore, $\pi$ is a constant map, and \eqref{reduced_Vector_field_definition} implies that ${\tilde{V} \equiv 0}$ (\textit{i.e.}, the dynamics in the quotient space are trivial).
Moreover, ${U := \hat{V}(y) \in \mathfrak{g}}$ is a constant (there is only one element ${y \in G/G}$).  Applying Theorem~\ref{thm:decomposition} yields
\begin{align*}
    \dot{g} &= W(g) + \diff \Left_g \circ \hat{V}(y) = W(g) + \diff \Left_g \circ U,
\end{align*}
exactly the form of a group affine vector field from \cite{vanGoor2021}.
\end{proof}

\section{Conclusion}

\label{sec:conclusion}

In conclusion, we have established the notion of weak invariance for dynamical systems, identifying a fertile middle ground between classical (strong) invariance and the more general notion of partial symmetry. Our weak notion of symmetry is defined in terms of additional data on the symmetry group, 
which we proved
must lie within the automorphism group of the symmetry group (or the Lie algebra of the automorphism group). 
We also proved that a flow is weakly invariant if and only if its generating vector field is also, emphasizing the naturality of our definitions. 
Finally, we showed that a weakly invariant dynamical system admits a powerful cascade decomposition in which the driven subsystem is group affine.
Future work will extend these relaxed notions of symmetry to systems with control inputs and exploit the structure retained even in this more general setting for 
control and observer design.

\bibliographystyle{IEEEtran}
\bibliography{references.bib}

\begin{thebibliography}{10}
\providecommand{\url}[1]{#1}
\csname url@samestyle\endcsname
\providecommand{\newblock}{\relax}
\providecommand{\bibinfo}[2]{#2}
\providecommand{\BIBentrySTDinterwordspacing}{\spaceskip=0pt\relax}
\providecommand{\BIBentryALTinterwordstretchfactor}{4}
\providecommand{\BIBentryALTinterwordspacing}{\spaceskip=\fontdimen2\font plus
\BIBentryALTinterwordstretchfactor\fontdimen3\font minus
  \fontdimen4\font\relax}
\providecommand{\BIBforeignlanguage}[2]{{%
\expandafter\ifx\csname l@#1\endcsname\relax
\typeout{** WARNING: IEEEtran.bst: No hyphenation pattern has been}%
\typeout{** loaded for the language `#1'. Using the pattern for}%
\typeout{** the default language instead.}%
\else
\language=\csname l@#1\endcsname
\fi
#2}}
\providecommand{\BIBdecl}{\relax}
\BIBdecl

\bibitem{Shapere_Wilczek_1989}
A.~Shapere and F.~Wilczek, ``Geometry of self-propulsion at low {R}eynolds
  number,'' \emph{Journal of Fluid Mechanics}, vol. 198, p. 557––585, 1989.

\bibitem{Vasile2018}
C.-I. Vasile, M.~Schwager, and C.~Belta, ``{Translational and Rotational
  Invariance in Networked Dynamical Systems},'' \emph{IEEE Transactions on
  Control of Network Systems}, vol.~5, no.~3, pp. 822--832, 2018.

\bibitem{Mishra2025}
H.~Mishra, M.~De~Stefano, and C.~Ott, ``{Is There a Closed-Loop Lagrangian for
  Hierarchical Motion Control?}'' \emph{IEEE Control Systems Letters}, vol.~9,
  pp. 1748--1753, 2025.

\bibitem{Bousias2025}
N.~Bousias, L.~Lindemann, and G.~Pappas, ``{Deep Equivariant Multi-Agent
  Control Barrier Functions},'' in \emph{IEEE/RSJ International Conference on
  Intelligent Robots and Systems}, 2025, pp. 997--1004.

\bibitem{zhao2023symplan}
L.~Zhao, X.~Zhu, L.~Kong, R.~Walters, and L.~L. Wong, ``{Integrating Symmetry
  into Differentiable Planning with Steerable Convolutions},'' in
  \emph{International Conference on Learning Representations}, 2023.

\bibitem{Wang2022}
D.~Wang, R.~Walters, and R.~Platt, ``{{$\mathrm{SO}(2)$}-Equivariant
  Reinforcement Learning},'' in \emph{International Conference on Learning
  Representations}, 2022.

\bibitem{Bonnabel2008tac}
S.~Bonnabel, P.~Martin, and P.~Rouchon, ``{Symmetry-Preserving Observers},''
  \emph{IEEE Transactions on Automatic Control}, vol.~53, no.~11, pp.
  2514--2526, 2008.

\bibitem{Bonnabel2009tac}
------, ``{Non-Linear Symmetry-Preserving Observers on Lie Groups},''
  \emph{IEEE Transactions on Automatic Control}, vol.~54, no.~7, 2009.

\bibitem{mahony2020equivariantsystemstheoryobserver}
R.~Mahony, T.~Hamel, and J.~Trumpf, ``{Equivariant Systems Theory and Observer
  Design},'' \emph{arXiv preprint arXiv:2006.08276}, 2020.

\bibitem{Barrau2017}
A.~Barrau and S.~Bonnabel, ``{The Invariant Extended Kalman Filter as a Stable
  Observer},'' \emph{IEEE Transactions on Automatic Control}, vol.~62, no.~4,
  pp. 1797--1812, 2017.

\bibitem{vanGoor2020}
P.~van Goor, T.~Hamel, and R.~Mahony, ``{Equivariant Filter (EqF): A General
  Filter Design for Systems on Homogeneous Spaces},'' in \emph{IEEE Conference
  on Decision and Control}, 2020, pp. 5401--5408.

\bibitem{vanGoor2021}
P.~van Goor and R.~Mahony, ``{Autonomous Error and Constructive Observer Design
  for Group Affine Systems},'' in \emph{IEEE Conference on Decision and
  Control}, 2021, pp. 4730--4737.

\bibitem{van2025synchronous}
------, ``Synchronous models and fundamental systems in observer design,''
  \emph{IEEE Transactions on Automatic Control}, pp. 1--15, 2026.

\bibitem{liu2025global}
C.~Liu and Y.~Shen, ``{Global Observer Design for a Class of Linear Observed
  Systems on Groups},'' \emph{arXiv:2507.18493}, 2025.

\bibitem{Grizzle1985}
J.~Grizzle and S.~Marcus, ``{The Structure of Nonlinear Control Systems
  Possessing Symmetries},'' \emph{IEEE Transactions on Automatic Control},
  vol.~30, no.~3, pp. 248--258, 1985.

\bibitem{nijmeijer1985partial}
H.~Nijmeijer and A.~Van~der Schaft, ``{Partial Symmetries for Nonlinear
  Systems},'' \emph{Mathematical Systems Theory}, vol.~18, no.~1, pp. 79--96,
  1985.

\bibitem{Welde2023}
J.~Welde, M.~D. Kvalheim, and V.~Kumar, ``{The Role of Symmetry in Constructing
  Geometric Flat Outputs for Free-Flying Robotic Systems},'' in \emph{IEEE
  International Conference on Robotics and Automation}, 2023, pp.
  12\,247--12\,253.

\bibitem{garcia2024controlled}
J.~S. Garcia and T.~Ohsawa, ``{Controlled Lagrangians and Stabilization of
  Euler--Poincar{\'e} Equations with Symmetry Breaking Nonholonomic
  Constraints},'' \emph{Journal of Nonlinear Science}, vol.~34, no.~5, p.~91,
  2024.

\bibitem{Panangaden2024}
P.~Panangaden, S.~Rezaei-Shoshtari, R.~Zhao, D.~Meger, and D.~Precup, ``{Policy
  Gradient Methods in the Presence of Symmetries and State Abstractions},''
  \emph{Journal of Machine Learning Research}, vol.~25, no.~71, pp. 1--57,
  2024.

\bibitem{Zhao2022thesis}
R.~Y. Zhao, ``\BIBforeignlanguage{English}{{Continuous Homomorphisms and
  Leveraging Symmetries in Policy Gradient Algorithms for Markov Decision
  Processes}},'' Master's thesis, McGill University, 2022.

\bibitem{welde2025leveragingsymmetryacceleratelearning}
J.~Welde, N.~Rao, P.~Kunapuli, D.~Jayaraman, and V.~Kumar, ``{Leveraging
  Symmetry to Accelerate Learning of Trajectory Tracking Controllers for
  Free-Flying Robotic Systems},'' in \emph{IEEE International Conference on
  Robotics and Automation}, 2025, pp. 15\,121--15\,127.

\bibitem{Pappas2000}
G.~Pappas, G.~Lafferriere, and S.~Sastry, ``{Hierarchically Consistent Control
  Systems},'' \emph{IEEE Transactions on Automatic Control}, vol.~45, no.~6,
  pp. 1144--1160, 2000.

\bibitem{Jouan_2010}
P.~Jouan, ``{Equivalence of control systems with linear systems on Lie groups
  and homogeneous spaces},'' \emph{ESAIM: Control, Optimisation and Calculus of
  Variations}, vol.~16, no.~4, p. 956–973, 2010.

\bibitem{Hochschild1952_automorphism}
G.~Hochschild, ``{The Automorphism Group of a Lie Group},'' \emph{Transactions
  of the American Mathematical Society}, vol.~72, no.~2, pp. 209--216, 1952.

\bibitem{Absil2007}
P.-A. Absil, C.~Lageman, and J.~H. Manton, ``Design of continuous-time flows on
  intertwined orbit spaces,'' in \emph{IEEE Conference on Decision and
  Control}, 2007, pp. 6244--6249.

\end{thebibliography}

\appendix

In this appendix, we provide an explicit proof of 
Lemma~\ref{small_time_weak_invariance_implies_all_time_weak_invariance}, which states that weak invariance of a flow locally in time implies weak invariance globally in time.
For any 
map ${h \in \Diff(M)}$
and 
any integer 
$n \in \mathbb{Z}$, the 
map ${h^n \in \Diff(M)}$
is defined by the (two-sided) recursion
\begin{align}
    h^{n+1} = h^n \circ h, && h^0 = \id_M,
\end{align}
where $\id_M : M \to M$ denotes the identity map.

\begin{proof}[Proof of Lemma~\ref{small_time_weak_invariance_implies_all_time_weak_invariance}]
    Necessity is automatic. For sufficiency, 
    fix any ${t \in \mathbb{R}}$.
    Then, there exists  ${\delta \in (-\varepsilon, \varepsilon)}$ and ${n \in \mathbb{N}}$ such that ${t = n\delta}$. Thus, by the group property of flows,
    \begin{equation*}
        \flow_t = \flow_{n\delta} = \underbrace{\flow_\delta \circ \cdots \circ \flow_\delta}_{n \textrm{ times}} = (\flow_\delta)^n.
    \end{equation*}
    It follows that
    \begin{align*}
    \nonumber
        \flow_t &\circ \Phi_g \circ \flow_{t}^{-1} 
        \\ &= 
        (\flow_{\delta})^n \circ \Phi_g \circ (\flow_{\delta})^{-n} 
        \\
        &= 
        (\flow_{\delta})^{n-1} \circ 
        \underbrace{\flow_{\delta} \circ \Phi_g \circ \flow^{-1}_{\delta}}_{=\Phi_{\sigma_\delta(g)}}
        \circ (\flow_{\delta})^{-(n-1)},
    \end{align*}
    where the equality underneath follows from the weak $\Phi$-invariance of $\flow$ locally in time, since ${\delta \in (-\varepsilon, \varepsilon)}$. Applying this property ${n-1}$ additional times, we verify that 
    \begin{align*}
        \flow_t \circ \Phi_g \circ \flow_{t}^{-1} = \Phi_{(\sigma_\delta)^n(g)},
    \end{align*}
    and thus $\flow_t$ is weakly $\Phi$-invariant with respect to ${\sigma_t : g \mapsto (\sigma_{\delta})^n(g)}$. Since ${t \in \mathbb{R}}$ was taken to be arbitrary, we have shown that $\flow$ is weakly $\Phi$-invariant globally in time.
\end{proof}

\end{document}